\documentclass[conference]{IEEEtran}

\usepackage{amsmath,amssymb,amsthm}
\usepackage{algorithm}
\usepackage{algpseudocode}
\usepackage{graphicx}
\usepackage{cite}
\usepackage{multirow}
\usepackage{booktabs}
\usepackage{xcolor}

\newtheorem{theorem}{Theorem}
\newtheorem{corollary}{Corollary}
\newtheorem{assumption}{Assumption}

\begin{document}
	
	\title{Large-Scale Partition-Based RIS Beamforming For Uplink RIS-Equipped Multi-User Systems: Asymptotic Analysis}
	
	\author{
		\IEEEauthorblockN{Samira Rahimian, Haris Gacanin}
		\IEEEauthorblockA{\textit{Chair for Distributed Signal Processing}\\
			\textit{RWTH Aachen University, Aachen, Germany}\\
			Email:\{rahimian, harisg@dsp.rwth-aachen.de\}}}
	
	\maketitle
	
	\begin{abstract}
		Combining a reconfigurable intelligent surface (RIS) with a receive antenna array is a promising low-complexity architecture for multi-user uplink reception, but its performance analysis for more than two users has remained an open problem: the zero-forcing (ZF) signal-to-interference-plus-noise ratio (SINR) no longer admits an explicit, low-dimensional closed-form, and its distribution is analytically intractable for design purposes. This paper closes that gap for a $K$-user, $K$-antenna uplink system in which a large-scale, $L$-element RIS, partitioned into $K$ user-dedicated sub-surfaces, precedes ZF reception at the base station. Through an asymptotic, large-scale analysis in which the sub-surface sizes grow without bound, we show that the orthogonal projector underlying the ZF SINR converges to a rank-one matrix aligned with the desired user's channel, and we use this convergence to derive a closed-form asymptotic approximation for the average per-user SINR that depends only on deterministic channel parameters. Treating this expression as a tractable design objective, we prove that equal partitioning is approximately sum-rate-optimal at leading order regardless of path-loss asymmetry across users, and we develop a low-complexity greedy pairwise-transfer search that refines the partition beyond this leading-order optimum. Monte Carlo simulations across a range of system and RIS sizes confirm that the closed-form SINR tracks the exact simulated rate closely once the RIS is large relative to the number of users, and that the greedy search yields consistent, if modest, sum-rate gains over equal partitioning, validating the theory as both an accurate performance predictor and a practical design tool.
	\end{abstract}
	
	\begin{IEEEkeywords}
		Reconfigurable intelligent surface, zero-forcing, uplink, partitioning, sum-rate maximisation.
	\end{IEEEkeywords}
	
	\section{Introduction}
	The push toward ever-higher data rates and stricter reliability
	targets in wireless networks has driven interest in surfaces
	that can actively reshape the propagation environment rather
	than merely receive or relay through it. The reconfigurable
	intelligent surface (RIS) is a leading instance of this idea:
	an array of low-cost, largely passive elements, each able to
	impose an independent phase (and, in some designs, amplitude)
	shift on an incident electromagnetic wave, collectively
	turning the wireless channel into a parameter that can be
	engineered rather than merely
	endured~\cite{DiRenzo2020,Zhao2025,Huang2020Holographic}.
	
	RIS-assisted architectures have been studied in two broad
	deployment settings. On the transmit side, a base station (BS)
	can pair a compact feed antenna, or a small number of feeds,
	with an RIS to synthesize multiple data streams for concurrent
	users, an approach pursued
	in~\cite{Li2025Transmissive,Li2025TMA,Mao2024Index,Huang2024Hybrid},
	while~\cite{Zeng2024Reconfigurable} extends this to a
	multi-feed transmitter and jointly analyzes the resulting
	system capacity and energy efficiency. Under non-orthogonal
	multiple access (NOMA), a related line of
	work~\cite{SunJing2024Holographic} designs RIS phase profiles
	that minimize the transmit power needed to meet per-user
	quality-of-service targets, for transmitters with either a
	single antenna or an array.
	
	A second, receiver-centric line of work places the RIS next to
	the receiving terminal instead. One branch equips every
	surface element with its own radio-frequency chain, giving a
	fully digitally processed array often termed a large
	intelligent surface
	(LIS)~\cite{Jung2019Reliability,Yuan2020Towards,Jung2020Performance,Jung2021PerformanceActive,Hu2018Beyond,Sanchez2022Distributed}.
	A second, more hardware-frugal branch pairs a passive RIS with
	a small conventional antenna
	array~\cite{Li2024TowardTransmissive,An2024Exploiting,Sun2025SINR}.
	Within this branch, \cite{Li2024TowardTransmissive} jointly
	optimizes power allocation, subcarrier assignment, and RIS
	phase coefficients to maximize the sum-rate of an uplink
	multi-user system subject to individual rate guarantees, with
	the receiver formed by an RIS and a horn antenna;
	\cite{An2024Exploiting} targets high-altitude-platform (HAP)
	networks with simultaneous wireless information and power
	transfer, where the HAP serves multiple ground terminals each
	equipped with a multi-layer refracting RIS and an antenna; and
	\cite{Sun2025SINR} studies a continuous-aperture RIS paired
	with a single antenna, deriving an SINR analysis and an
	accompanying interference-mitigation scheme for multi-user
	reception.
	
	Our own prior work~\cite{RahimianJingYu2026} falls in this
	second, receiver-centric category, but targets the smallest
	non-trivial multi-user instance: two single-antenna users
	served by a receiver formed from an RIS and two antennas. The
	surface there is split into two sub-surfaces, one per user,
	each shaping its dedicated user's incident signal, following
	the partitioning philosophy already explored
	in~\cite{SunJing2024Holographic,Chauhan2023RIS,KhaleelBasar2022NOMA};
	zero-forcing (ZF) at the two receive antennas then clears
	whatever interference the partitioning leaves behind. An
	asymptotic statistical treatment of the resulting SINR yielded
	closed-form per-user rate approximations, which simulation
	confirmed to be accurate and to translate into a tangible
	performance gain over unpartitioned designs.
	
	The present paper lifts that two-user, two-antenna analysis in~\cite{RahimianJingYu2026} to an arbitrary $K$-user, $K$-antenna uplink system served by a large-scale RIS with $L$ elements, partitioned $K$ ways, and the generalization is not merely a change of index range. In the $2\times2$ case, the ZF SINR reduces to an explicit ratio of $2\times2$ determinants that can be handled term by term; for general $K$, no such closed-form exists, and the SINR instead hinges on the orthogonal projector onto the complement of the interfering users' joint channel subspace. Our main technical contribution is an asymptotic, large-scale analysis showing that, as each sub-surface size grows without bound, this projector concentrates around a rank-one matrix, a convergence argument with no counterpart in the two-user analysis. That result then unlocks a closed-form asymptotic approximation, valid in this large-scale regime, for the average per-user ZF SINR, expressed purely in terms of the deterministic parameters of the channels. Treating this expression as a design objective, we further prove that equally sized partitions are approximately sum-rate-optimal at leading order, regardless of how asymmetric the users' path losses are, and we benchmark a low-complexity greedy pairwise-transfer search against equal partitioning under realistic Monte Carlo simulation.
	\section{System Model}
	\label{sec:system}
	We consider a communication system where $K$ single-antenna users transmit information to a receiver equipped with an array of $K$ antennas and an adjacent Reconfigurable Intelligent Surface (RIS). The receiver antennas form a uniform linear array (ULA). The RIS consists of $L$ elements arranged on a uniform planar array (UPA) of $M\times M$ elements ($L=M^2$). The center of the RIS is located at the origin $(0,0,0)$ of a 3D Cartesian
	coordinate system, and the RIS is located in the
	$(y,z)$-plane. The RIS element in row $m$ and column $p$
	($m=0,\ldots,M-1,p=0,\ldots,M-1$) is denoted the $n$-th element under the column-major flattening $n = pM+m+1$.
	With inter-element spacing $d_S$ in both the $y$ and $z$ directions, the
	coordinate of the $n$-th RIS element is $\mathbf{p}_n = \left(0,\;
	\Bigl(m-\tfrac{M-1}{2}\Bigr)d_S,\;
	\Bigl(p-\tfrac{M-1}{2}\Bigr)d_S\right).$
	\begin{figure}[t]
		\centering
		\includegraphics[width=0.65\linewidth]{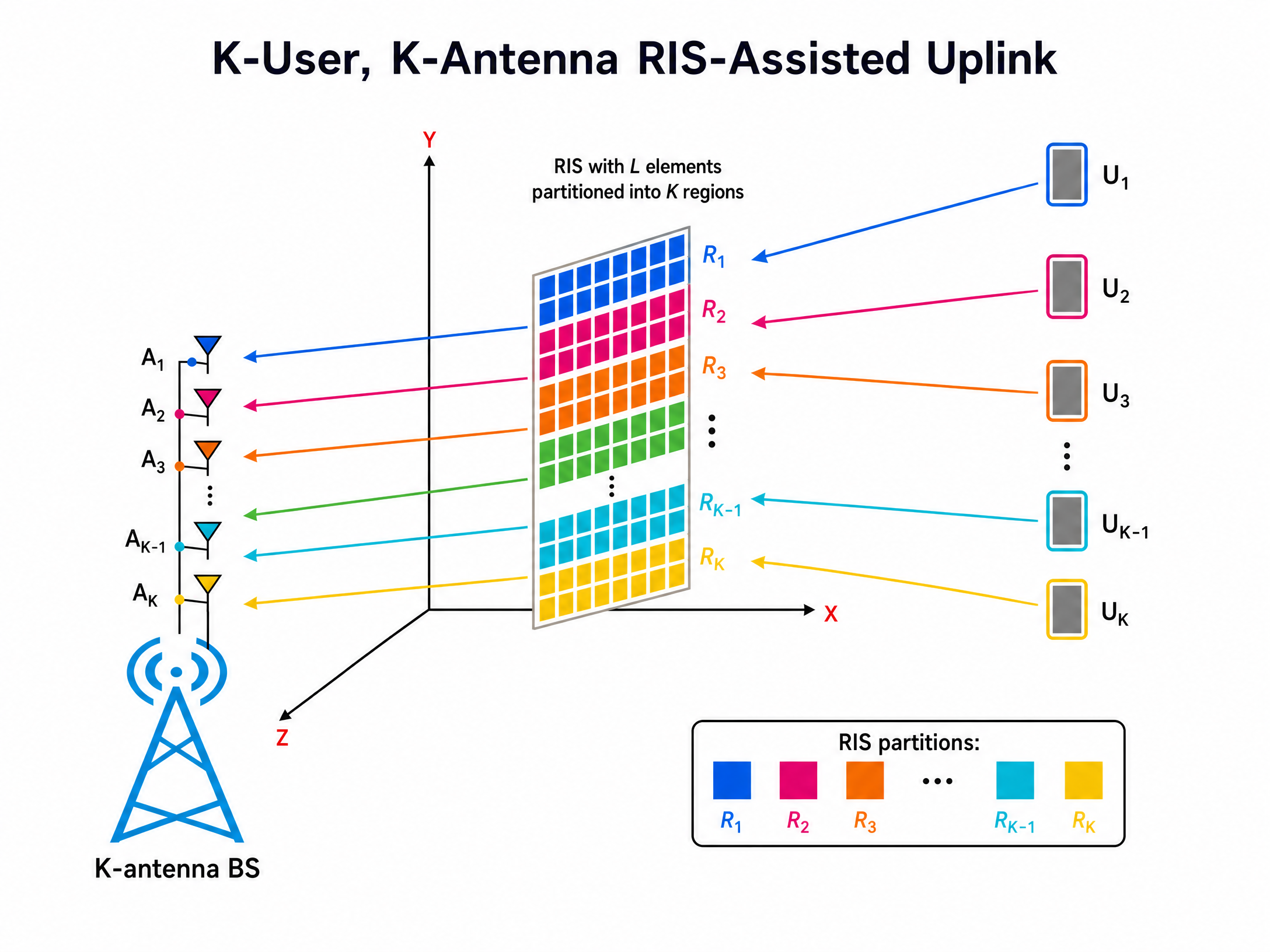}
		\caption{K-user, K-antenna system model with L RIS elements divided into K partitions.}
		\label{fig:Uplink_RIS_K_generalized_corrected}
	\end{figure}
	The $K$ receiver antennas form a uniform linear array (ULA) parallel to the
	$y$-axis, located at broadside distance $d_B$ from the RIS along the
	$x$-axis and centred in $y$ so that the array is symmetric with respect to
	the RIS. With inter-antenna spacing $d_A$, the position of the $i$-th
	antenna ($i=1,\ldots,K$) is
	\begin{equation}
		\mathbf{p}_{a,i} = \left(-d_B,\;
		\Bigl(\tfrac{K+1}{2}-i\Bigr)d_A,\; 0\right).
		\label{eq:antenna_pos}
	\end{equation}
	
	The channel between the $n$-th RIS element and the $i$-th receiver antenna is modeled as free-space line-of-sight (LOS):
	\begin{equation}
		h_{in} = \frac{1}{2\sqrt{\pi} r_{i,n}} e^{-j\frac{2\pi}{\lambda} r_{i,n}},
	\end{equation}
	where $r_{i,n}$ is the Euclidean distance between the $i$-th receive antenna and the $n$-th RIS element. The channel between User $k$ and the $n$-th RIS element is modeled as $\sqrt{\alpha_k} t_{nk}$, where $\alpha_k$ is the path-loss coefficient and $t_{nk}$ follows an independent Rayleigh fading distribution, $t_{nk} \sim \mathcal{CN}(0, 1)$.
	
	The $L \times K$ channel
	matrix between the users and RIS is thus
	\begin{equation}
		{\mathbf{T}} = \left[\sqrt{\alpha_1}\,\mathbf{t}_1\;\;\sqrt{\alpha_2}\,\mathbf{t}_2\;\;\cdots\;\;\sqrt{\alpha_K}\,\mathbf{t}_K\right],
		\label{eq:T_tilde}
	\end{equation}
	where $\mathbf{t}_k \in \mathbb{C}^{L\times1}$ includes the fading
	coefficients $t_{nk}$, $n=1,\ldots,L$, for user $k$. The information
	symbols of the $K$ users are represented by $x_1,\ldots,x_K$ and
	$\mathbf{x} = [x_1\;\cdots\;x_K]^T$ denotes the transmit signal vector.
	The transmit power of the users is denoted as $P$. The users transmit
	their signals simultaneously using the same frequency bandwidth. The
	user signals are refracted and reconfigured by the RIS, then reach the
	antennas at the receiver. The processed received signal vector, denoted
	as $\mathbf{y}=[y_1\;\cdots\;y_K]^T\in \mathbb{C}^{K \times 1}$, can thus be represented as the following.
	\begin{equation}
		\mathbf{y} = \sqrt{P}\,\mathbf{V}\mathbf{H}\boldsymbol{\Phi} {\mathbf{T}}\mathbf{x}
		+ \mathbf{V} {\mathbf{n}},
		\label{eq:y_matrix}
	\end{equation}
	where
	\begin{equation}
		\boldsymbol{\Phi} = \mathrm{diag}\!\left(e^{j\phi_1}, e^{j\phi_2}, \ldots, e^{j\phi_L}\right)
		\label{eq:Phi_matrix}
	\end{equation}
	is the RIS beamforming matrix with $e^{j\phi_n}$ being the phase-shift
	applied by the $n$-th RIS element, $\mathbf{H}\in\mathbb{C}^{K\times L}$ is
	the channel matrix from the RIS to the $K$ receive antennas (with
	entries $h_{in}$), $\mathbf{V}$ is the beamforming matrix at the
	receiver, and $ {\mathbf{n}}$ is the noise vector at the receiver
	antennas, following $\mathcal{CN}(\mathbf{0}_K,\mathbf{I}_K)$.  The effective channel matrix representing the channel from User $k$ to Antenna $i$ is denoted as
	$\mathbf{G}=\mathbf{H}\boldsymbol{\Phi} {\mathbf{T}}\in \mathbb{C}^{K \times K}$ with entries:
	
	\begin{equation}
		g_{ik} = \sqrt{\alpha_k} \sum_{n=1}^L h_{in} t_{nk} e^{j\phi_n},
	\end{equation}
	
	\section{Combined RIS Partition-Based Beamforming and ZF Detection}
	\label{sec:proof}
	The natural idea of RIS beamforming is to form coherent
	combination of the signals by canceling the phases of the
	propagation links passing through each RIS element. This
	is clearly the optimal design for communications with a
	single user. Unfortunately, for multi-user systems where user
	interference exist, such a design is far from
	straightforward. In this work, we consider the case where
	the number of RIS elements is very large, and propose to
	combine partition-based beamforming at the RIS and ZF
	beamforming at the receiver antennas, taking advantage of the
	strong capability of RIS in interference mitigation and system 
	performance.
	
	\subsection{RIS Phase Shift and Receiver Antenna Beamforming Designs}
	
	For the $K$-user system, we partition the RIS into $k$ subsurfaces,
	each assigned to a specific user, and responsible for
	the cancellation of its assigned user’s channel phases. The set
	of RIS elements of the $k$th sub-surface, serving User k, is
	denoted as 
	which are partitioned into $K$ sub-surfaces $R_1, R_2, \dots, R_K$, where the $k$-th sub-surface $R_k$ is dedicated to User $k$ and responsible for the beamforming of the $k$th user’s signal. The number of elements in each partition is denoted by $L_{pk} = |R_k|$, satisfying $\sum_{k=1}^K L_{pk} = L$.
	The partitions can be expressed as
	\begin{equation}
		\begin{split}
			& \mathcal{R}_1
			=
			\{1,\ldots,L_{p_1}\},
			\\
			&	\mathcal{R}_k
			=
			\left\{
			\sum_{m=1}^{k-1}L_{p_m}+1,
			\ldots,
			\sum_{m=1}^{k}L_{p_m}
			\right\} \text{for $k>1$}.
		\end{split}
		\label{eq:seq_partition}
	\end{equation} 
	Then, the partition-based RIS beamforming for elements in $R_k$ is designed to cancel the phase of the effective channel from User $k$ to Antenna $k$ as follows:
	\begin{equation}
		\phi_n = -\angle(h_{kn} t_{nk}), \quad \forall n \in R_k.
		\label{eq:phase_design}
	\end{equation}
	
	For ZF at the receiver antennas, the beamforming matrix is given as the following.
	\begin{equation}
		\mathbf{V}_{\text{ZF}} = \left(\mathbf{G}^H\mathbf{G}\right)^{-1}\mathbf{G}^H,
		\label{eq:zf_pseudo}
	\end{equation}
	where $[\cdot]^H$ represents the Hermitian. For
	the considered case of $K$ receive antennas and $K$ users the beamforming matrix in~\eqref{eq:zf_pseudo} reduces to $\mathbf{V}_{\text{ZF}}=\mathbf{G}^{-1}$. The RIS beamforming and the ZF
	beamforming are both in closed-form and their computational
	complexity is $\mathcal{O}(L)$.
	\subsection{Performance Analysis}
	In what follows, we analyze the user average achievable rate and
	the system sum-rate. With \eqref{eq:zf_pseudo}, the receive SINR of User k
	can be written as
	\begin{equation}
		\gamma_{\text{ZF},k} = \frac{P}{[(\mathbf{G}^H \mathbf{G})^{-1}]_{kk}}.
		\label{eq:zf_sinr_exact}
	\end{equation}
	where $[\mathbf{A}]_{ij}$ represents the $(i, j)$th entry of matrix $\mathbf{A}$.
	The term $[(\mathbf{G}^H\mathbf{G})^{-1}]_{kk}$ is the $k$-th diagonal element of the inverse Gram matrix. By using the matrix inversion lemma and properties of block/cofactor decomposition, we have:
	
	\begin{equation}
		\begin{split}
			\gamma_{\text{ZF},k}=	\frac{P}{[(\mathbf{G}^H\mathbf{G})^{-1}]_{kk}}
			= P\mathbf{g}_k^H \mathbf{P}_{-k}^{\perp} \mathbf{g}_k,
			\\
			\mathbf{P}_{-k}^{\perp}
			=
			\mathbf{I}
			-
			\mathbf{G}_{-k}
			\left(
			\mathbf{G}_{-k}^{H}\mathbf{G}_{-k}
			\right)^{-1}
			\mathbf{G}_{-k}^{H}.
			\label{eq:zf_sinr_projector}
		\end{split}
	\end{equation}

	Where $	\mathbf{P}_{-k}^{\perp}$ denote the orthogonal projector onto the complement of $\operatorname{span}(\mathbf{G}_{-k})$, where
	$\mathbf{G}_{-k}$ is obtained by removing the $k$-th column from
	$\mathbf{G}$. Also, $\mathbf{g}_k=[
	g_{1k}, g_{2k}, \cdots, g_{Kk}]^{T}$ is the effective channel vector corresponding to User~$k$.

	In the following, we want to find an asymptotic closed-form expression for the average achievable rates of users, $\bar{R}_k$. According to the law of large numbers, we have:
	\begin{equation}
		\bar{R}_k=\mathbb{E}[\log_{2}(1+	\gamma_{\text{ZF},k})]\approx\log_{2}(1+	\mathbb{E}[\gamma_{\text{ZF},k}])=\log_{2}(1+	  \bar{\gamma}_{\text{ZF},k})
		\label{eq:avg_rate}
	\end{equation}
	and $	\bar{\gamma}_{\text{ZF},k}=P\mathbb{E}[\mathbf{g}_k^H \mathbf{P}_{-k}^{\perp} \mathbf{g}_k].$
	
	In order to find a closed-form asymptotic expression for the average achievable rates of user, we start by the derivation of $\mathbb{E}\!\left[\mathbf{g}_k^{H}\mathbf{P}_{-k}^{\perp}\mathbf{g}_k\right]$.
	Using the idempotence of the projection matrix, $\mathbf{P}_{-k}^{\perp} = (\mathbf{P}_{-k}^{\perp})^2$, we obtain
	\[
	\mathbf{g}_k^{H}
	\mathbf{P}_{-k}^{\perp}
	\mathbf{g}_k
	=
	\left\|
	\mathbf{P}_{-k}^{\perp}
	\mathbf{g}_k
	\right\|^{2}
	\ge 0.
	\]
	Now, consider the following vector decomposition for $\mathbf{g}_k$.
	\begin{equation}
		\mathbf{g}_k
		=
		\sqrt{\alpha_k}
		\left(
		S_k\mathbf{e}_k
		+
		\boldsymbol{\zeta}_k
		\right),
	\end{equation}
	where $S_k
	=
	\sum_{n\in\mathcal{R}_k}
	|h_{kn}|\,|t_{nk}|
	$ is the coherent channel gain for User $k$, and $\mathbf{e}_k$ is the $k$-th canonical basis vector. Further, $\boldsymbol{\zeta}_k$ is the interference channel vector
	\[
	\boldsymbol{\zeta}_k
	=
	\mathbf{a}_k^{\perp}
	+
	\mathbf{b}_k,
	\] 
	where $\mathbf{a}_k^{\perp}$ is defined such that $[\mathbf{a}_k^{\perp}]_k = 0$, and for $i \neq k$,
	\begin{equation}
		\left[\mathbf{a}_k^{\perp}\right]_i
		= \sum_{n \in \mathcal{R}_k} |h_{in}|\,|t_{nk}|\,
		e^{j\left(\angle h_{in} - \angle h_{kn}\right)}.
		\label{eq:a_perp_entry}
	\end{equation}
	Also, $ [\mathbf{b}_k]_i
	=
	\sum_{n\notin\mathcal{R}_k}
	h_{in}t_{nk}e^{j\phi_n}.$
	From \eqref{eq:a_perp_entry}, since $|t_{nk}|$ is Rayleigh distributed with
	$\mathbb{E}[|t_{nk}|] = \sqrt{\pi}/2$, so
	\begin{equation}
		\left[\mathbb{E}[\boldsymbol{\zeta}_k]\right]_i
		= \frac{\sqrt{\pi}}{2}
		\sum_{n \in \mathcal{R}_k} |h_{in}|\,
		e^{j\left(\angle h_{in} - \angle h_{kn}\right)},
		\qquad i \neq k,
		\label{eq:mu_k_def}
	\end{equation}
	and $\left[\mathbb{E}[\boldsymbol{\zeta}_k]\right]_k = 0$. Substituting the decomposition above,
	\begin{equation}
		\begin{split}
			&	\left\|
			\mathbf{P}_{-k}^{\perp}
			\mathbf{g}_k
			\right\|^{2}
			=\\
			&
			\alpha_k
			\Big(
			S_k^{2}
			\|
			\mathbf{P}_{-k}^{\perp}\mathbf{e}_k
			\|^{2}
			+
			2S_k
			\operatorname{Re}
			\{
			\mathbf{e}_k^{H}
			\mathbf{P}_{-k}^{\perp}
			\boldsymbol{\zeta}_k
			\}
			+
			\|
			\mathbf{P}_{-k}^{\perp}
			\boldsymbol{\zeta}_k
			\|^{2}
			\Big).
		\end{split}
		\label{eq:quad_expansion}
	\end{equation}
	In the following, we will characterize $\mathbf{G}_{-k}$  and $\mathbf{P}_{-k}^{\perp}$ to further evaluate the expectation of each of the three terms above.
	
	Since $\mathbf{G}_{-k} \in \mathbb{C}^{K\times(K-1)}$ has full column rank almost surely
	(its $K-1$ columns are generically linearly independent for $K \leq L$), the rank-nullity theorem
	applied to $\mathbf{G}_{-k}^H : \mathbb{C}^{K} \to \mathbb{C}^{K-1}$ gives
	\begin{equation}
		\dim\!\left(\ker\!\left(\mathbf{G}_{-k}^H\right)\right)
		= K - \operatorname{rank}(\mathbf{G}_{-k}) = 1.
		\label{eq:null_dim_main}
	\end{equation}
	Let $\mathbf{u}_k \in \mathbb{C}^K$, $\|\mathbf{u}_k\|=1$, denote the unique (up to a global
	phase) unit vector satisfying $\mathbf{G}_{-k}^H\mathbf{u}_k = \mathbf{0}$.
	The orthogonal projector onto this null space is the rank-one matrix
	\begin{equation}
		\mathbf{P}_{-k}^{\perp} = \mathbf{u}_k\mathbf{u}_k^H,
		\label{eq:rank1_projector}
	\end{equation}
	so that
	\begin{equation}
		\|\mathbf{P}_{-k}^{\perp}\mathbf{e}_k\|^2
		= \mathbf{e}_k^H\mathbf{u}_k\mathbf{u}_k^H\mathbf{e}_k
		= |\mathbf{u}_k^H\mathbf{e}_k|^2
		= |[{\mathbf{u}_k}]_k|^2,
		\label{eq:proj_squared}
	\end{equation}
	where $[{\mathbf{u}_k}]_k$ denotes the $k$-th entry of $\mathbf{u}_k$. Therefore, evaluating
	\eqref{eq:proj_squared}
	requires characterizing the distribution of the null vector $\mathbf{u}_k$.
	\textcolor{black}{For this, we continue evaluating the mean and covariance of the elements of  $\mathbf{G}_{-k}$. The mean vector of the $\mathbf{g}_j$ $j \neq k$ columns of $\mathbf{G}_{-k}$ is shown by  $\mathbf{m}_j$.}
	We have
	\begin{equation}
		\begin{split}
			g_{ij} &= \sqrt{\alpha_j}\sum_{n=1}^{L}h_{in}t_{nj}e^{j\phi_n}, \qquad j \neq k
			\\
			&=\sqrt{\alpha_j}
			\underbrace{\sum_{n \in \mathcal{R}_j}h_{in}t_{nj}e^{j\phi_n}}_{A_{ij}}
			+\sqrt{\alpha_j}
			\underbrace{\sum_{n \notin \mathcal{R}_j}h_{in}t_{nj}e^{j\phi_n}}_{B_{ij}}, j \neq k.
			\\
			&\mathbb{E}[A_{ij}]
			= \frac{\sqrt{\pi}}{2}\sum_{n \in \mathcal{R}_j}|h_{in}|\,e^{j(\angle h_{in}-\angle h_{jn})}, \mathbb{E}[B_{ij}] = 0
			\label{eq:gij_def}
		\end{split}
	\end{equation}
	
	Therefore, the mean of the $(i,j)$th entry of $\mathbf{G}_{-k}$ is
	\begin{equation}
		\begin{split}
			[\mathbf{m}_j]_i= \mathbb{E}[g_{ij}] = \frac{\sqrt{\pi}}{2}\sqrt{\alpha_j}
			\sum_{n \in \mathcal{R}_j}|h_{in}|\,e^{j(\angle h_{in}-\angle h_{jn})}.
		\end{split}
		\label{eq:m_j_def}
	\end{equation}
	The mean vector $\mathbf{m}_j$ is non-zero for all $i$. In particular, the $j$-th (diagonal) entry
	\begin{equation}
		[\mathbf{m}_j]_j
		= \frac{\sqrt{\pi}}{2}\sqrt{\alpha_j}\sum_{n \in \mathcal{R}_j}|h_{jn}|
		= O(L_{pj})
		\label{eq:m_j_diag}
	\end{equation}
	is real, positive, and grows linearly with the partition size $L_{pj}$, while for $i \neq j$,
	\begin{equation}
		[\mathbf{m}_j]_i
		= \frac{\sqrt{\pi}}{2}\sqrt{\alpha_j}\sum_{n \in \mathcal{R}_j}|h_{in}|\,
		e^{-j\frac{2\pi}{\lambda}(r_{in}-r_{jn})}
		\label{eq:m_j_offdiag}
	\end{equation}
	is a sum of $L_{pj}$ complex phasors whose phases
	\begin{equation}
		\theta_n \triangleq -\frac{2\pi}{\lambda}(r_{in}-r_{jn}), \qquad n \in \mathcal{R}_j,
		\label{eq:phase_diff}
	\end{equation}
	depend on the path-length differences $r_{in}-r_{jn}$ between antenna $i$ and antenna $j$
	to each RIS element $n$. We now bound this sum for the confirmed antenna
	geometry of~\eqref{eq:antenna_pos}, under the following standard far-field
	assumption. Because $\mathcal{R}_j$ is
	defined as a contiguous block of flattened indices~\eqref{eq:seq_partition},
	it generally spans \emph{multiple columns} of the physical grid, so the phase progression across $\mathcal{R}_j$ must be
	treated as two-dimensional.
	\begin{assumption}[Far-field planar phase progression]
		\label{ass:farfield}
		The BS antennas $i,j$ are two elements of the ULA at the base station,
		with positions $\mathbf{p}_{a,i},\mathbf{p}_{a,j}$ as given
		in~\eqref{eq:antenna_pos}. The
		RIS is a planar $M\times M$ array with inter-element spacing $d$ (typically
		$d=\lambda/2$) in both the row and column directions. The BS array and the
		RIS are in each other's far field, so that the path-length difference from
		element $(m,p)$ satisfies exactly
		\begin{equation}
			r_{i,(m,p)} - r_{j,(m,p)} = \Delta_{ij} + m\,d\,\nu_{ij}^{(y)}
			+ p\,d\,\nu_{ij}^{(z)},
			\quad m,p = 0,\ldots,M-1,
			\label{eq:farfield}
		\end{equation}
		where $\Delta_{ij}$ is the path-length difference at the reference element
		$(0,0)$, and $\nu_{ij}^{(y)},\nu_{ij}^{(z)}\in(-2,2)$ are the normalised
		path differences along the row and column directions,
		\begin{equation}
			\nu_{ij}^{(y)} \triangleq \frac{\|\mathbf{p}_{a,i}-\mathbf{p}_{a,j}\|}{\|\mathbf{p}_{\mathrm{RIS}}\|}
			\sin\phi_{ij}^{(y)}, \qquad
			\nu_{ij}^{(z)} \triangleq \frac{\|\mathbf{p}_{a,i}-\mathbf{p}_{a,j}\|}{\|\mathbf{p}_{\mathrm{RIS}}\|}
			\sin\phi_{ij}^{(z)},
			\label{eq:nu_def}
		\end{equation}
		with $\phi_{ij}^{(y)},\phi_{ij}^{(z)}$ the angles of the antenna-separation
		vector $\mathbf{p}_{a,i}-\mathbf{p}_{a,j}$ (along the BS ULA) projected onto the RIS
		row and column axes respectively, and $\mathbf{p}_{\mathrm{RIS}}$ the
		position vector of either antenna relative to the RIS centre (the origin
		of Section~\ref{sec:system}), so that $\|\mathbf{p}_{\mathrm{RIS}}\|\approx
		d_B$ for antennas near broadside.
	\end{assumption}
	Assumption~\ref{ass:farfield} is the standard far-field (plane-wave) model
	used in array signal processing~\cite{vanTrees2002}. Under it, and using the
	column-major partition structure of~\eqref{eq:seq_partition},
	Appendix~\ref{app:projector} shows that the mean channel matrix
	$\mathbf{M}_{-k}\triangleq[\mathbf{m}_j]_{j\neq k}$ is diagonally dominant,
	with $|[\mathbf{m}_j]_j|=O(L_{pj})$ on the diagonal against
	$|[\mathbf{m}_j]_i|=O(L_{pj}/\sqrt{L})$ off it for $i\neq j$, and that the
	fluctuation-to-mean energy ratio vanishes as $L_{pj}\to\infty$, so that
	$\mathbf{G}_{-k}\to\mathbf{M}_{-k}$ in mean square. Building on this,
	Appendix~\ref{app:projector} further shows, in three steps, that the
	null-space projector converges entrywise to the rank-one limit
	$\mathbf{P}_{-k}^{\perp}=\mathbf{u}_k\mathbf{u}_k^H\xrightarrow{p}\mathbf{e}_k\mathbf{e}_k^H$
	(equation~\eqref{eq:Pperp_matrix_conv}), with, in particular,
	$\mathbb{E}[\|\mathbf{P}_{-k}^{\perp}\mathbf{e}_k\|^2]\to1$
	(equation~\eqref{eq:proj_one}). We apply this matrix-level convergence to the
	derivations of each term in \eqref{eq:quad_expansion} that follow.
	\subsection{Asymptotic Approximation of $\mathbb{E}\!\left[S_k^{2}\,
		\|\mathbf{P}_{-k}^{\perp}\mathbf{e}_k\|^{2}\right]$}

	Since $S_k \perp \mathbf{P}_{-k}^{\perp}$,
	\begin{equation}
		\mathbb{E}\!\left[S_k^2\,\|\mathbf{P}_{-k}^{\perp}\mathbf{e}_k\|^{2}\right]
		= \mathbb{E}[S_k^2]\,\mathbb{E}\!\left[\|\mathbf{P}_{-k}^{\perp}\mathbf{e}_k\|^2\right].
		\label{eq:factorisation}
	\end{equation}
	By \eqref{eq:proj_one}, the second factor satisfies
	\begin{equation}
		\mathbb{E}\!\left[\|\mathbf{P}_{-k}^{\perp}\mathbf{e}_k\|^2\right]
		\to 1
		\quad\text{as } L_{pj}\to\infty,
		\label{eq:Pperp_ek_approx}
	\end{equation}
	so the first term reduces to $\mathbb{E}[S_k^2]$, evaluated exactly in the following.
	
	\medskip
	\noindent\textbf{Evaluation of  {$\mathbb{E}[S_k^2]$} }
	\label{sec:Sk2}
	
	Since $t_{nk}\sim\mathcal{CN}(0,1)$, $|t_{nk}|$ follows a Rayleigh distribution with the same standard moments used in Section~\ref{sec:proof}~\cite[Ch.~2]{Kay1993}:
	\[
	\mathbb{E}[|t_{nk}|] = \frac{\sqrt{\pi}}{2}, \qquad \mathbb{E}[|t_{nk}|^{2}] = 1, \qquad
	\mathrm{Var}(|t_{nk}|) = 1-\frac{\pi}{4}.
	\]
	Using independence across RIS elements,
	\[
	\mathbb{E}[S_k^2]
	= \hspace{-3mm}\sum_{n \in \mathcal{R}_k}|h_{kn}|^2\,\mathbb{E}[|t_{nk}|^2]
	+\hspace{-3mm} \sum_{\substack{n,m \in \mathcal{R}_k \\ n \neq m}} |h_{kn}||h_{km}|\,\mathbb{E}[|t_{nk}|]\,\mathbb{E}[|t_{mk}|],
	\]
	which simplifies to
	\[
	\boxed{
		\mathbb{E}[S_k^{2}]
		=
		\left(
		1-\frac{\pi}{4}
		\right)
		\sum_{n\in\mathcal{R}_k}
		|h_{kn}|^{2}
		+
		\frac{\pi}{4}
		\left(
		\sum_{n\in\mathcal{R}_k}
		|h_{kn}|
		\right)^{2}.
	}
	\]

	\subsection{Asymptotic Approximation of  {$\mathbb{E}[\|\mathbf{P}_{-k}^{\perp}\boldsymbol{\zeta}_k\|^2]$} }\label{sec:pzeta}
	
	Substituting the matrix approximation $\mathbf{P}_{-k}^{\perp} \approx
	\mathbf{e}_k\mathbf{e}_k^H$ established in \eqref{eq:Pperp_matrix_conv},
	\begin{equation}
		\mathbb{E}\!\left[\|\mathbf{P}_{-k}^{\perp}\boldsymbol{\zeta}_k\|^2\right]
		\approx \mathbb{E}\!\left[\|\mathbf{e}_k\mathbf{e}_k^H\boldsymbol{\zeta}_k\|^2\right]
		= \mathbb{E}\!\left[|[\boldsymbol{\zeta}_k]_k|^2\right].
		\label{eq:leakage_approx_ek}
	\end{equation}
	
	Recall $\boldsymbol{\zeta}_k = \mathbf{a}_k^{\perp} + \mathbf{b}_k$ with $[\mathbf{a}_k^{\perp}]_k = 0$ by construction, so
	$[\boldsymbol{\zeta}_k]_k = [\mathbf{b}_k]_k
	= \sum_{n\notin\mathcal{R}_k}h_{kn}\,t_{nk}\,e^{j\phi_n}$,
	where $\phi_n$ is independent of $t_{nk}$ and $\mathbb{E}[|t_{nk}|^2]=1$, so
	\begin{align}
		\mathbb{E}\!\left[|[\boldsymbol{\zeta}_k]_k|^2\right]
		= \mathbb{E}\!\left[|[\mathbf{b}_k]_k|^2\right]
		&=\hspace{-3mm} \sum_{n \notin \mathcal{R}_k}|h_{kn}|^2\,\mathbb{E}[|t_{nk}|^2]
		= \hspace{-3mm}\sum_{n \notin \mathcal{R}_k}|h_{kn}|^2.
		\label{eq:bk_k_sq}
	\end{align}
	and, then
	\begin{equation}
		\mathbb{E}\!\left[\|\mathbf{P}_{-k}^{\perp}\boldsymbol{\zeta}_k\|^2\right]
		\approx \sum_{n\notin\mathcal{R}_k}|h_{kn}|^2
		\;\leq\; \sum_{n=1}^{L}|h_{kn}|^2
		\;=\; O(L).
		\label{eq:leakage_Pperp_approx}
	\end{equation}

	Since $\mathbb{E}[S_k^2] = O(L_{pk}^2)$ (Section~\ref{sec:Sk2}),  \eqref{eq:leakage_Pperp_approx}
	is asymptotically negligible:
	\begin{equation}
		\frac{\mathbb{E}\!\left[\|\mathbf{P}_{-k}^{\perp}\boldsymbol{\zeta}_k\|^2\right]}
		{\mathbb{E}[S_k^2]}
		= O\!\left(\frac{L}{L_{pk}^2}\right) \to 0
		\quad\text{as } L_{pk}\to\infty.
		\label{eq:leakage_negligible}
	\end{equation}

	\subsection{Asymptotic Approximation of The Cross Term, $2S_k
		\operatorname{Re}
		\{
		\mathbf{e}_k^{H}
		\mathbf{P}_{-k}^{\perp}
		\boldsymbol{\zeta}_k
		\}$}
	\label{sec:cross-term}
	
	The cross term in \eqref{eq:quad_expansion} is
	\begin{equation}
		2\,\mathbb{E}\!\left[S_k\,\mathrm{Re}\!\left(
		\mathbf{e}_k^H \mathbf{P}_{-k}^{\perp} \boldsymbol{\zeta}_k\right)\right],
		\label{eq:cross_term}
	\end{equation}
	where $S_k$, $\boldsymbol{\zeta}_k$, and $\mathbf{P}_{-k}^{\perp}$ are all random.
	
	The $i=k$ contribution to this cross term vanishes exactly, since
	$[\mathbf{a}_k^{\perp}]_k=0$ by the phase-shift design and
	$\mathbb{E}[S_k\mathbf{b}_k]=\mathbf{0}$ by independence. For $i\neq k$,
	Appendix~\ref{app:crossterm} bounds each contribution by separately
	rate-bounding $\mathbb{E}[u_{kk}\,u_{ki}^*]$ and
	$[\mathbb{E}[S_k\boldsymbol{\zeta}_k]]_i$, showing that each is
	$O(L_{pk})$ (equation~\eqref{eq:cross_term_order}).
	We will now compare the order of the cross term relative to the leading-order signal term, 	$\mathbb{E}[S_k^2]=O(L_{pk}^2)=O(L^2/K^2)$ (Section~\ref{sec:Sk2}) to find the final approximation.  Since
	there are at most $K-1$ terms in the sum (fixed $K$), summing
	\eqref{eq:cross_term_order} over $i\neq k$ preserves the order, and
	\begin{equation}
		\begin{split}
			\frac{2\,\mathbb{E}\!\left[S_k\,\mathrm{Re}\!\left(
				\mathbf{e}_k^H \mathbf{P}_{-k}^{\perp} \boldsymbol{\zeta}_k\right)\right]}
			{\mathbb{E}[S_k^2]}
			&= O\!\left(\frac{L/K^{3/2}}{L^2/K^2}\right)\\
			&= O\!\left(\frac{\sqrt{K}}{L}\right)
			\;\xrightarrow{L\to\infty}\; 0
			\quad\text{for fixed } K.
		\end{split}
		\label{eq:cross_zero}
	\end{equation}
	This means that the cross term vanishes at rate $O(1/L)$ which makes it asymptotically negligible.
	The $i=k$ contribution to the cross term is \emph{exactly} zero due to the
	phase-shift design ($[\mathbf{a}_k^{\perp}]_k=0$ by construction and
	$\mathbb{E}[S_k\mathbf{b}_k]=\mathbf{0}$ by independence). The $i\neq k$
	contributions, from~\eqref{eq:cross_term_order} expressed in terms of
	$L_{pk}$ alone (using $L=KL_{pk}$, $K$ fixed): each term is
	$O(L_{pk}^{3/2}/\sqrt{L}) = O(L_{pk}^{3/2}/\sqrt{KL_{pk}}) = O(L_{pk}/\sqrt{K})
	= O(L_{pk})$, and summing the (at most $K-1$, fixed) terms preserves this
	order, so they are jointly $O(L_{pk})$ --- not individually vanishing in
	absolute terms, but negligible compared with the leading $O(L_{pk}^2)$
	signal term $\mathbb{E}[S_k^2]$, as shown in
	\eqref{eq:cross_zero}. This relative negligibility becomes arbitrarily strong
	as the partition size $L_{pj}\to\infty$, which is the regime of interest in
	this work.

	\subsection{Final Expression for  {$\mathbb{E}[\mathbf{g}_k^H\mathbf{P}_{-k}^{\perp}\mathbf{g}_k]$} }
	
	From~\eqref{eq:quad_expansion}, using the relative negligibility of the cross
	term and {$\mathbb{E}[\|\mathbf{P}_{-k}^{\perp}\boldsymbol{\zeta}_k\|^2]$} term, ,$	2S_k
	\operatorname{Re}
	\{
	\mathbf{e}_k^{H}
	\mathbf{P}_{-k}^{\perp}
	\boldsymbol{\zeta}_k
	\}$ established in Sections~\ref{sec:cross-term}, and \ref{sec:pzeta}, we obtain the following closed-form asymptotic
	approximation valid as $L_{pj}\to\infty$:
	\begin{equation}
		\begin{split}
			\mathbb{E}\!\left[\mathbf{g}_k^H \mathbf{P}_{-k}^{\perp} \mathbf{g}_k\right]
			&\approx
			\alpha_k\left\{\mathbb{E}\!\left[S_k^{2}\,
			\|\mathbf{P}_{-k}^{\perp}\mathbf{e}_k\|^{2}\right]\right\}=\alpha_k\,\mathbb{E}[S_k^2]\\
			&
			=	\alpha_k\!\left[
			\left(1-\frac{\pi}{4}\right)\sum_{n\in\mathcal{R}_k}|h_{kn}|^{2}
			+\frac{\pi}{4}\!\left(\sum_{n\in\mathcal{R}_k}|h_{kn}|\right)^{\!2}
			\right].
			\label{eq:final_two_term}
		\end{split}
	\end{equation}

	\section{Asymptotic ZF SINR}
	\label{sec:asymptotic}
	Recalling 
	from~\eqref{eq:zf_sinr_projector}, equation~\eqref{eq:final_two_term}
	establishes the closed-form average asymptotic ZF SINR stated in
	Theorem~\ref{thm:main} below.
	
	We summarise the $K$-user uplink system of Section~\ref{sec:system}: $K$
	single-antenna users are served by $K$ receive antennas assisted by an
	$L$-element RIS partitioned into contiguous sub-surfaces
	$\{\mathcal{R}_k\}$~\eqref{eq:seq_partition}, with per-element RIS phase
	design~\eqref{eq:phase_design} and ZF receive beamforming ~\eqref{eq:zf_pseudo}. Under
	Assumption~\ref{ass:farfield} and the LOS channel between the RIS and BS antennas in which every
	partition size $L_{pk}\to\infty$, the derivation of
	Section~\ref{sec:proof} yields the following closed-form result.
	
	\begin{theorem}[Closed-form asymptotic average ZF SINR]
		\label{thm:main}
		As $L_{pk}\to\infty$ for all $k$, the average ZF SINR of User $k$ satisfies
		\begin{equation}
			\boxed{
				\bar{\gamma }_{\mathrm{ZF},k} \approx P\alpha_k\!\left[
				\Bigl(1-\tfrac{\pi}{4}\Bigr)\!\!\sum_{n\in\mathcal{R}_k}\!\!|h_{kn}|^2
				+\tfrac{\pi}{4}\Bigl(\sum_{n\in\mathcal{R}_k}\!|h_{kn}|\Bigr)^{\!2}
				\right]. }
			\label{eq:final_strong_los}
		\end{equation}
	\end{theorem}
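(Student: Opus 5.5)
The theorem is essentially an assembly of the term-by-term estimates of Section~\ref{sec:proof}, so the proof I would give stitches them together in the ratio sense. I start from the exact identity \eqref{eq:zf_sinr_projector}, so that $\bar{\gamma}_{\mathrm{ZF},k}=P\,\mathbb{E}[\mathbf{g}_k^H\mathbf{P}_{-k}^{\perp}\mathbf{g}_k]$. I then substitute the decomposition $\mathbf{g}_k=\sqrt{\alpha_k}(S_k\mathbf{e}_k+\boldsymbol{\zeta}_k)$ to obtain the three-term expansion \eqref{eq:quad_expansion}. I will make the symbol ``$\approx$'' precise as follows: the ratio of $\bar{\gamma}_{\mathrm{ZF},k}$ to the right-hand side of \eqref{eq:final_strong_los} tends to $1$ as all $L_{pk}\to\infty$ (with $K$ fixed and the partition proportions bounded away from zero). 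Under this reading, it suffices to show two things. First, the signal term divided by $\mathbb{E}[S_k^2]$ tends to $1$. Second, the other two terms divided by $\mathbb{E}[S_k^2]$ tend to $0$.

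\textbf{Signal term.} By the Rayleigh moment computation, $\mathbb{E}[S_k^2]$ is exactly the bracket in \eqref{eq:final_strong_los}. The LOS model bounds $|h_{kn}|$ above and below by constants over the RIS in the far-field regime. Hence $\mathbb{E}[S_k^2]\asymp L_{pk}^2$, driven by the $\tfrac{\pi}{4}(\sum|h_{kn}|)^2$ part. For the factor $\|\mathbf{P}_{-k}^{\perp}\mathbf{e}_k\|^2$, I will not rely on the independence claim \eqref{eq:factorisation}. It is delicate, because $\mathbf{G}_{-k}$ contains $t_{nj}$ for $n\in\mathcal{R}_k$ through $\mathbf{b}_j$, and the phases $\phi_n$ on $\mathcal{R}_k$ depend on $t_{nk}$. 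Instead I will use boundedness: $0\le\|\mathbf{P}_{-k}^{\perp}\mathbf{e}_k\|^2\le1$, and it converges in probability to $1$ by \eqref{eq:Pperp_matrix_conv}. I then write
\[
\mathbb{E}\big[S_k^2(1-\|\mathbf{P}_{-k}^{\perp}\mathbf{e}_k\|^2)\big]\le \sqrt{\mathbb{E}[S_k^4]}\,\sqrt{\mathbb{E}\big[(1-\|\mathbf{P}_{-k}^{\perp}\mathbf{e}_k\|^2)^2\big]}.
\]
Since $\mathbb{E}[S_k^4]=O(L_{pk}^4)$ by the moments of a sum of independent bounded-variance terms, and the second factor tends to $0$ by bounded convergence, the ratio to $\mathbb{E}[S_k^2]$ vanishes.

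\textbf{Leakage and cross terms.} For the leakage term I again avoid replacing $\mathbf{P}_{-k}^{\perp}$ by $\mathbf{e}_k\mathbf{e}_k^H$ inside the expectation. I use the crude bound $\|\mathbf{P}_{-k}^{\perp}\boldsymbol{\zeta}_k\|^2\le\|\boldsymbol{\zeta}_k\|^2$ together with $\mathbb{E}\|\boldsymbol{\zeta}_k\|^2=\|\mathbb{E}\boldsymbol{\zeta}_k\|^2+O(L)$. By the diagonal-dominance estimate of Appendix~\ref{app:projector}, applied to the analogous sums \eqref{eq:mu_k_def}, the mean part is $O(L_{pk}^2/L)$ per entry. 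Hence $\mathbb{E}\|\boldsymbol{\zeta}_k\|^2=O(L_{pk}^2/L+L)=o(L_{pk}^2)$, consistent with \eqref{eq:leakage_negligible}. For the cross term, Cauchy--Schwarz gives
\[
\big|\mathbb{E}[S_k\mathrm{Re}(\mathbf{e}_k^H\mathbf{P}_{-k}^{\perp}\boldsymbol{\zeta}_k)]\big|\le\sqrt{\mathbb{E}[S_k^2]}\sqrt{\mathbb{E}\|\boldsymbol{\zeta}_k\|^2}.
\]
The ratio to $\mathbb{E}[S_k^2]$ is therefore $O(\sqrt{\mathbb{E}\|\boldsymbol{\zeta}_k\|^2}/L_{pk})\to0$. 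This is weaker than \eqref{eq:cross_zero}, but sufficient. Alternatively, one can invoke \eqref{eq:cross_term_order} directly.

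\textbf{Main obstacle.} The crux is the off-diagonal estimate $|[\mathbb{E}\boldsymbol{\zeta}_k]_i|=o(L_{pk})$. This estimate comes from the far-field phase progression of Assumption~\ref{ass:farfield}: it controls sums of $L_{pk}$ phasors with linearly varying phase over a contiguous column-major block as Dirichlet-kernel sums. That is what Appendix~\ref{app:projector} supplies, and I would take it as given. Combining the three estimates yields $\mathbb{E}[\mathbf{g}_k^H\mathbf{P}_{-k}^{\perp}\mathbf{g}_k]=\alpha_k\mathbb{E}[S_k^2](1+o(1))$. Multiplying by $P$ gives \eqref{eq:final_strong_los}.
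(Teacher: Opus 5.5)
Your proof is correct within the paper's own framework: LOS amplitudes bounded above and below, $L/L_{pk}$ bounded, and the Appendix~\ref{app:projector} results taken as given. It does, however, control the three terms of \eqref{eq:quad_expansion} by a different mechanism from the paper's.

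The paper leans on independence throughout. It factors the signal term as in \eqref{eq:factorisation}. It evaluates the leakage term by substituting $\mathbf{P}_{-k}^{\perp}\approx\mathbf{e}_k\mathbf{e}_k^H$ inside the expectation, which gives $\sum_{n\notin\mathcal{R}_k}|h_{kn}|^2$. In Appendix~\ref{app:crossterm} it conditions on the projector, computes $\mathbb{E}[S_k\boldsymbol{\zeta}_k]$ exactly, and rate-bounds $\mathbb{E}[u_{kk}u_{ki}^*]$ through $\mathbb{E}[|u_{ki}|^2]=O(L/L_{pi}^2)$.

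You use only $\mathbf{0}\preceq\mathbf{P}_{-k}^{\perp}\preceq\mathbf{I}$, the limit $|u_{kk}|^2\xrightarrow{p}1$, and moment bounds: a fourth-moment Cauchy--Schwarz with bounded convergence for the signal term, the contraction $\|\mathbf{P}_{-k}^{\perp}\boldsymbol{\zeta}_k\|\le\|\boldsymbol{\zeta}_k\|$ for leakage, and a single Cauchy--Schwarz for the cross term.

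This buys rigor:
\begin{itemize}
\item no independence between the projector and $(S_k,\boldsymbol{\zeta}_k)$ is invoked;
\item no in-probability limit is substituted inside the expectation of an unbounded quantity;
\item the symbol ``$\approx$'' acquires a precise meaning as a ratio tending to $1$.
\end{itemize}
Your only extra input, $\|\mathbb{E}\boldsymbol{\zeta}_k\|=o(L_{pk})$, is the same geometric-series estimate the paper already uses for Term~2 in Appendix~\ref{app:crossterm}. The price is a weaker cross-term rate, $O(\sqrt{L}/L_{pk})=O(K/\sqrt{L})$ versus the $O(\sqrt{K}/L)$ of \eqref{eq:cross_zero}, and no explicit leakage expression. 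Neither matters for the theorem, whereas the paper's route delivers both.

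Your caution about \eqref{eq:factorisation} is well placed. The paper's claim $\mathbf{P}_{-k}^{\perp}=f_P(\mathcal{T}_{-k})$ is false, because the columns of $\mathbf{G}_{-k}$ contain the factors $e^{j\phi_n}$, $n\in\mathcal{R}_k$, with $\phi_n=-\angle(h_{kn}t_{nk})$.

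The independence conclusions nevertheless survive. $\mathbf{P}_{-k}^{\perp}$ depends on $\{t_{nk}\}_{n\in\mathcal{R}_k}$ only through their phases. $S_k$ and $\mathbf{a}_k^{\perp}$ depend on them only through their moduli, and phase and modulus are independent for $\mathcal{CN}(0,1)$ variables. Conditionally on everything entering $\mathbf{P}_{-k}^{\perp}$, $\mathbf{b}_k$ is circularly-symmetric Gaussian with a covariance that does not depend on the conditioning. So bypassing independence is a robustness choice on your part rather than a necessity.
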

	\begin{proof}[Proof]
		Full derivations are given in
		Section~\ref{sec:proof} of this work.
	\end{proof}
	
	\subsection{Sum-Rate Optimality of Equal Partitioning}
	\label{sec:optimality}
	
	Theorem~\ref{thm:main} lets us characterize which
	partition sizes $\{L_{pk}\}$ maximize the sum-rate
	$R_{\mathrm{sum}}=\sum_k\log_2(1+\bar{\gamma}_{\mathrm{ZF},k})$ at leading order. More specifically, The coherent term in~\eqref{eq:final_strong_los}
	dominates the incoherent term, since it is $O(L_{pk}^2)$ against
	$O(L_{pk})$. Approximating each antenna's LOS amplitude as roughly constant
	over its own sub-surface, $|h_{kn}|\approx\bar{h}_k$ for $n\in\mathcal{R}_k$, reasonable since $\mathcal{R}_k$ is a physically localized block of
	the array, the dominant term simplifies to
	$\bigl(\sum_{n\in\mathcal{R}_k}|h_{kn}|\bigr)^2\approx
	\bar{h}_k^2L_{pk}^2$, giving
	\begin{equation}
		\bar{\gamma}_{\mathrm{ZF},k} \approx c_k L_{pk}^2,
		\qquad c_k \triangleq \frac{\pi}{4}P\alpha_k\bar{h}_k^2,
		\label{eq:gamma_ck}
	\end{equation}
	where $c_k$ depends on the path loss $\alpha_k$ and channel strength
	$\bar{h}_k$ but not on the partition sizes. For $L_{pk}$ large enough
	that $c_kL_{pk}^2\gg1$, $\log_2(1+c_kL_{pk}^2)\approx\log_2(c_kL_{pk}^2)$,
	so
	\begin{equation}
		R_{\mathrm{sum}} \approx \sum_{k=1}^{K}\log_2(c_k)
		+ 2\sum_{k=1}^{K}\log_2(L_{pk}).
		\label{eq:rsum_approx}
	\end{equation}
	The first term is a constant, independent of $\{L_{pk}\}$, and thus the partition
	only affects $R_{\mathrm{sum}}$ through $\sum_k\log_2(L_{pk})$.
	
	\begin{corollary}[Sum-rate optimality of equal partitioning]
		\label{cor:equal}
		Subject to $\sum_{k=1}^K L_{pk}=L$, the approximate sum-rate
		in~\eqref{eq:rsum_approx} is maximized by equal partitioning,
		$L_{pk}^*=L/K$ for all $k$, independently of $\{\alpha_k\}$ and
		$\{\bar{h}_k\}$.
	\end{corollary}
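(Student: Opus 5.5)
The proof reduces to maximizing a concave separable objective under a single linear constraint. By~\eqref{eq:rsum_approx}, the first sum $\sum_k\log_2(c_k)$ does not depend on the partition sizes, because each $c_k=\frac{\pi}{4}P\alpha_k\bar{h}_k^2$ involves only $P$, the path loss, and the per-sub-surface LOS amplitude. So maximizing the approximate sum-rate subject to $\sum_k L_{pk}=L$ is equivalent to
\begin{equation*}
\max_{L_{p1},\ldots,L_{pK}>0}\ \sum_{k=1}^{K}\log_2(L_{pk})\quad\text{s.t.}\quad \sum_{k=1}^K L_{pk}=L.
\end{equation*}
The argument does not need $\alpha_k$ or $\bar{h}_k$ again, and this already gives the claimed independence from $\{\alpha_k\}$ and $\{\bar{h}_k\}$.

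I would first relax the integer sizes to positive reals. The AM--GM inequality then gives
\begin{equation*}
\prod_{k}L_{pk}\le\Bigl(\tfrac{1}{K}\sum_k L_{pk}\Bigr)^K=(L/K)^K,
\end{equation*}
with equality if and only if all $L_{pk}$ are equal. Taking $\log_2$ shows $\sum_k\log_2 L_{pk}\le K\log_2(L/K)$, with equality exactly at $L_{pk}^*=L/K$. As a cross-check I would also give the Lagrangian argument. The stationarity condition $1/(L_{pk}\ln 2)=\mu$ forces all $L_{pk}$ to be equal. Since $\log_2$ is strictly concave, the objective is strictly concave on the simplex, so this stationary point is the unique global maximizer.

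The step needing care is integrality together with the validity of the approximation, not the optimization itself. If $K\nmid L$, I would state that the integer optimum takes sizes in $\{\lfloor L/K\rfloor,\lceil L/K\rceil\}$. This follows from a pairwise-exchange argument: if $L_{pa}\ge L_{pb}+2$, moving one element from $a$ to $b$ strictly increases $\log L_{pa}+\log L_{pb}$ by concavity. I would also note that the conclusion is only \emph{approximate leading-order} optimality. It relies on dropping the incoherent $O(L_{pk})$ term and the ``$+1$'' in the logarithm, which is justified when $c_kL_{pk}^2\gg1$, i.e.\ in the regime of Theorem~\ref{thm:main}. It also relies on treating $\bar{h}_k$ as independent of the partition boundaries. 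Corrections from these sources are lower order and may shift the true optimum slightly, which is what motivates the greedy refinement.
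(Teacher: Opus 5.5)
Your proposal is correct and matches the paper's proof. The paper uses exactly your Lagrangian cross-check: stationarity $1/(L_{pk}\ln 2)=\mu$ forces equal sizes, and the negative-definite diagonal Hessian gives uniqueness. Your AM--GM step reaches the same global conclusion without any second-order check. Two of your additions go slightly beyond the paper: the pairwise-exchange argument for integer sizes in $\{\lfloor L/K\rfloor,\lceil L/K\rceil\}$, and the caveat that $\bar{h}_k$ may itself depend on the partition boundaries. The paper works only with the continuous relaxation and asserts without comment that $c_k$ does not depend on the partition.
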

	\begin{proof}
		Maximize $f(\{L_{pk}\})=\sum_k\log_2(L_{pk})$ subject to
		$\sum_kL_{pk}=L$. The Lagrangian is
		$f(\{L_{pk}\})-\mu\bigl(\sum_kL_{pk}-L\bigr)$; the stationarity
		condition $\partial/\partial L_{pk}=1/(L_{pk}\ln2)-\mu=0$ gives
		$L_{pk}=1/(\mu\ln2)$ for every $k$, i.e.\ all $L_{pk}$ equal, and the
		constraint then fixes $L_{pk}^*=L/K$. The Hessian of $f$ is
		$\mathrm{diag}(-1/(L_{pk}^2\ln2))$, negative definite, so this
		stationary point is the unique maximizer. Since $c_k$ does not
		appear in $f$, the maximizer does not depend on
		$\{\alpha_k\},\{\bar{h}_k\}$.
	\end{proof}
	
	Corollary~\ref{cor:equal} is an approximate, leading-order result: it
	relies on the high-SINR approximation
	$\log_2(1+c_kL_{pk}^2)\approx\log_2(c_kL_{pk}^2)$ and on treating
	$|h_{kn}|$ as constant within each sub-surface, whereas the true LOS
	amplitudes vary across the array (a coefficient of variation of
	roughly $18$--$19\%$ within a partition in our simulation setting). It
	therefore predicts equal partitioning as the leading-order optimum
	rather than guaranteeing it exactly; Section~\ref{sec:sim} confirms this
	empirically via the greedy pairwise-transfer search, which operates on
	the closed-form objective~\eqref{eq:final_strong_los}
	and consistently converges to partitions mildly perturbed from
	$L_{pk}=L/K$, rather than to exactly equal partitions.

	\section{Simulation Results}
	\label{sec:sim}
	
	We use $\lambda=0.1$~m, $d_A=2\lambda$, $d_B=5\lambda$, RIS element
	spacing $\lambda/2$, and $1000$ Monte Carlo realizations per
	configuration. For each realization, the exact ZF SINR is computed via
	the equation~\eqref{eq:zf_sinr_exact}, $\gamma_{\mathrm{ZF},k} =
	P/[(\mathbf{G}^H\mathbf{G})^{-1}]_{kk}$, using the full random effective
	channel matrix $\mathbf{G}$; the simulated rate is
	$\log_2(1+\gamma_{\mathrm{ZF},k})$, averaged over realizations. The
	asymptotic rate uses the closed-form average SINR of Theorem~\ref{thm:main}
	(equation~\eqref{eq:final_strong_los}) directly in
	$\log_2(1+\bar{\gamma}_{\mathrm{ZF},k})$.
	
	\subsection{Validation of the Average Asymptotic Rate}
	
	Fig.~\ref{fig:sumrate} compares the simulated (solid) and
	asymptotic~\eqref{eq:final_strong_los} (dashed) sum-rate for
	$K\in\{2,4,8\}$ and $L\in\{1024,4096\}$ under equal partitioning and
	$\alpha_k=1$. The gap between the two, averaged over the SNR sweep,
	decreases with $L$ for every $K$: $0.21\%\to0.19\%$ for $K=2$,
	$0.98\%\to0.47\%$ for $K=4$, and $11.86\%\to2.80\%$ for $K=8$, confirming
	Theorem~\ref{thm:main} as an accurate predictor once the RIS is
	sufficiently large relative to $K$.
	
	\begin{figure}[t]
		\centering
		\includegraphics[width=\linewidth]{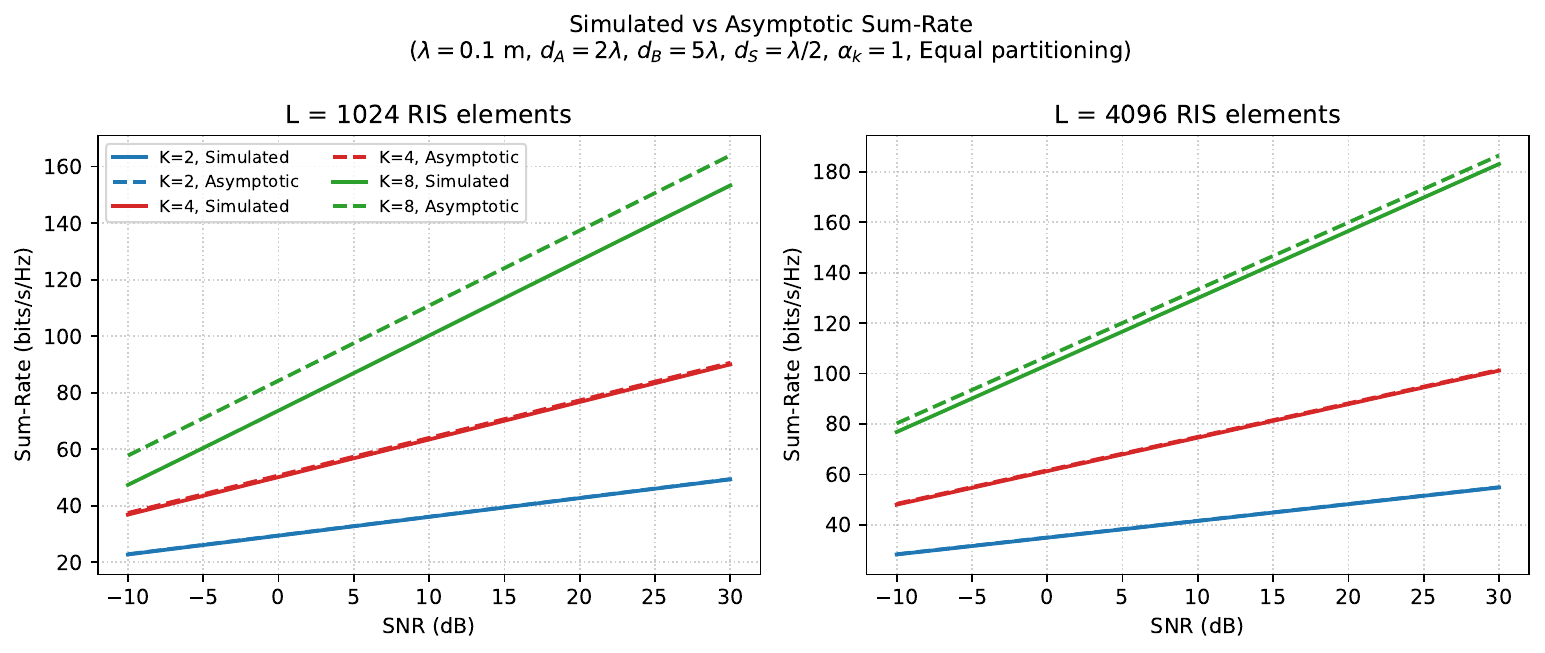}
		\caption{Sum-rate vs.\ SNR, $K\in\{2,4,8\}$, $L\in\{1024,4096\}$, equal
			partitioning, $\alpha_k=1$. Solid: Monte Carlo (exact SINR,
			equation~\eqref{eq:zf_sinr_exact}); dashed: asymptotic
			(Theorem~\ref{thm:main}).}
		\label{fig:sumrate}
	\end{figure}
	
	\subsection{Sum-Rate Maximisation: Search vs.\ Equal Partitioning}
	Corollary~\ref{cor:equal} identifies the leading-order optimal partition
	under simplifying approximations; here we state and address the asymptotic
	discrete optimization problem those approximations bypass. Using the
	closed-form average asymptotic rate $\log_2(1+\bar{\gamma}_{\mathrm{ZF},k})$ from
	Theorem~\ref{thm:main}~\eqref{eq:final_strong_los} as the per-user
	objective, the sum-rate-maximizing partition solves
	\begin{equation}
		\begin{split}
			\max_{\{L_{pk}\}_{k=1}^{K}} \quad
			& R_{\mathrm{sum}}(\{L_{pk}\})
			= \sum_{k=1}^{K}\log_2\!\bigl(1+\bar{\gamma}_{\mathrm{ZF},k}(L_{pk})\bigr) \\
			\text{s.t.} \quad
			& \sum_{k=1}^{K}L_{pk} = L,  L_{pk}\in\mathbb{Z}_{>0}, \quad k=1,\ldots,K,
		\end{split}
		\label{eq:sumrate_problem}
	\end{equation}
	where the partitions $\{\mathcal{R}_k\}$ induced by $\{L_{pk}\}$ remain
	contiguous blocks of flattened RIS indices as
	in~\eqref{eq:seq_partition}, and $\bar{\gamma}_{\mathrm{ZF},k}(L_{pk})$ is
	given by~\eqref{eq:final_strong_los}. Problem~\eqref{eq:sumrate_problem}
	is a discrete resource-allocation problem over the $K-1$ free integer
	partition boundaries; $R_{\mathrm{sum}}$ is neither convex nor concave in
	$\{L_{pk}\}$ once the asymptotic per-user rates are used, and
	an exhaustive search over all $\binom{L-1}{K-1}$ compositions of $L$ into
	$K$ positive parts is computationally infeasible for the RIS sizes
	considered in this work. Using~\eqref{eq:final_strong_los} as a fast, deterministic objective, the
	greedy pairwise-transfer search finds the
	sum-rate-maximizing partition without any channel realizations: starting
	from equal partitioning, it repeatedly transfers one element between the
	pair of users that most increases the asymptotic sum-rate, until no
	single-element transfer improves it. We then validate the result the
	search finds against real Monte Carlo simulation for asymmetric large scale scenarios.
	
	For $K=4$, $L=1024$, $\alpha=[2,1,0.5,0.1]$, $P=15$~dB, the search
	converges after $64$ iterations to sizes $[288,224,224,288]$, a small
	perturbation away from equal partitioning $[256,256,256,256]$; we
	verified explicitly that no single-element transfer from this partition
	improves the objective further, confirming a true local optimum.
	Fig.~\ref{fig:sumratemax_K4} confirms this improvement holds under real
	simulated channels: the search-optimal partition achieves
	$66.80$~bits/s/Hz versus $66.73$~bits/s/Hz for equal partitioning
	($+0.108\%$), closely tracking the asymptotic prediction of
	$+0.144\%$ improvement.
	
	\begin{figure}[t]
		\centering
		\includegraphics[width=\linewidth]{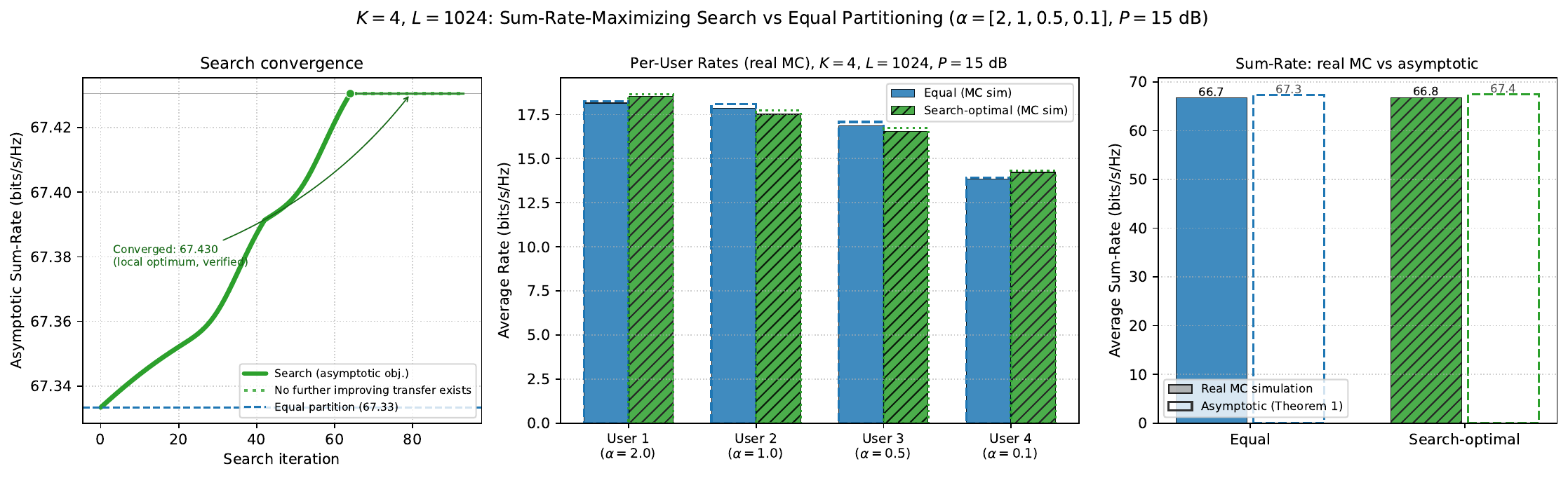}
		\caption{$K=4$, $L=1024$, $\alpha=[2,1,0.5,0.1]$, $P=15$~dB: search
			convergence (left), per-user rates from real Monte Carlo simulation
			(centre), and sum-rate comparison against the asymptotic objective
			(right).}
		\label{fig:sumratemax_K4}
	\end{figure}
	
	For $K=8$, $L=4096$, $\alpha=[2,1.5,1,0.8,0.5,0.3,0.2,0.1]$, $P=20$~dB,
	the search converges after $276$ iterations to
	$[640,522,453,433,433,453,522,640]$, again a verified local optimum with
	no further improving single-element transfer. Fig.~\ref{fig:sumratemax_K8} shows
	the same conclusion at this larger scale: the search-optimal partition
	achieves $150.15$~bits/s/Hz versus $149.41$~bits/s/Hz for equal
	partitioning under real simulation ($+0.492\%$), again consistent in
	direction and magnitude with the asymptotic prediction ($+0.353\%$).
	
	\begin{figure}[t]
		\centering
		\includegraphics[width=\linewidth]{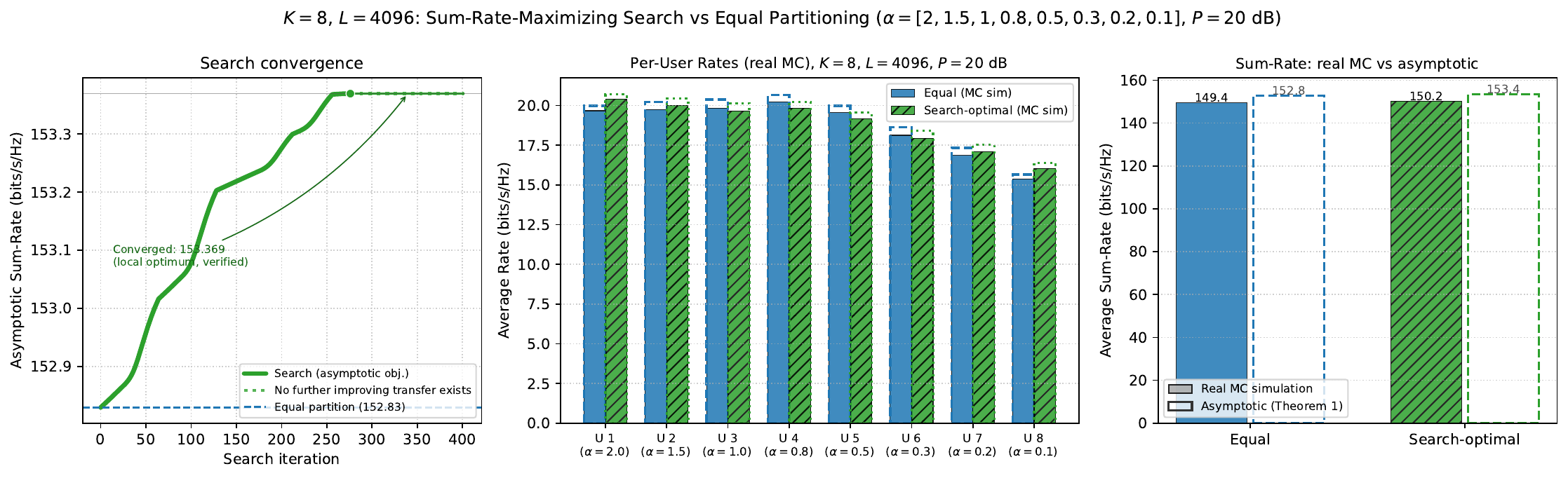}
		\caption{$K=8$, $L=4096$, $\alpha=[2,1.5,1,0.8,0.5,0.3,0.2,0.1]$,
			$P=20$~dB: search convergence (left), per-user rates from real
			Monte Carlo simulation (centre), and sum-rate comparison against the
			asymptotic objective (right).}
		\label{fig:sumratemax_K8}
	\end{figure}
	
	In both cases, the improvement the search finds is small  which is consistent with Corollary~\ref{cor:equal}, which shows equal
	partitioning is approximately sum-rate-optimal at leading order. But
	it is a genuine, positive improvement confirmed by real simulated
	channels: the search-optimal partition never simulates worse
	than equal partitioning in either scenario, and the amount of improvement under real simulation is always within the same order as the
	asymptotic prediction.
	
	\section{Conclusion}
	\label{sec:conclusion}
	We derived a closed-form asymptotic approximation, obtained through a large-scale analysis in which the RIS sub-surface sizes grow without bound, for the average achievable rates of users in a large-scale RIS-equipped uplink system under partition-based RIS beamforming and ZF receive beamforming. Simulation results confirm the accuracy of this closed-form expression across a range of system sizes, with the required RIS size for a given accuracy growing with the number of users. Building on this expression, we further showed analytically that equal partitioning is approximately sum-rate-optimal at leading order, independently of path-loss asymmetry (Corollary~\ref{cor:equal}), and validated a greedy pairwise-transfer search against real Monte Carlo simulation, confirming that the improvements it finds over equal partitioning are genuine rather than artifacts of the asymptotic objective the search itself relies on. Future work includes jointly optimizing the partition sizes and phase-shift design under a max-min-optimal achievable-rate criterion, and extending the partition-based RIS beamforming to quantized phase-shifting at the RIS.

	\appendices
	\section{Proof of the Asymptotic Rank-One Projector Approximation}
	\label{app:projector}
	
	Assumption~\ref{ass:farfield} is the standard far-field (plane-wave) model
	used in array signal processing~\cite{vanTrees2002}. Under it, the
	phase of element $(m,p)$ satisfies \emph{exactly}
	\begin{equation}
		\theta_{(m,p)} = \theta_{(0,0)} + 2\pi m\alpha_{ij}^{(y)} + 2\pi p\alpha_{ij}^{(z)}, \alpha_{ij}^{(y)}\triangleq\frac{d}{\lambda}\nu_{ij}^{(y)}, \alpha_{ij}^{(z)}\triangleq\frac{d}{\lambda}\nu_{ij}^{(z)}.
		\label{eq:phase_arith}
	\end{equation}
	Given Assumption~\ref{ass:farfield}, we now bound
	\eqref{eq:m_j_offdiag} directly for the confirmed antenna geometry
	of~\eqref{eq:antenna_pos}: every BS antenna lies at $z=0$, so for any pair
	$i\neq j$ the antenna-separation vector $\mathbf{p}_{a,i}-\mathbf{p}_{a,j}$
	has zero $z$-component, giving $\nu_{ij}^{(z)}=0$ and hence
	$\alpha_{ij}^{(z)}=0\in\mathbb{Z}$ \emph{identically} for this array. The
	$y$-direction remains generically non-resonant, since antennas differ in
	$y$ by~\eqref{eq:antenna_pos} and $\alpha_{ij}^{(y)}$ is irrational for
	generic $d_A/\lambda$.
	
	\smallskip
	\noindent\textbf{Bounding the off-diagonal mean: three-piece decomposition.}
	Since $\mathcal{R}_j$ is a contiguous block of the column-major flattened index
	(equation~\eqref{eq:seq_partition}), it decomposes into at most three pieces:
	a partial first column $\mathcal{R}_j^{(1)}$ (rows $m=m_1,\ldots,M-1$ of a
	single column $p_0$), a block of full columns
	$\mathcal{R}_j^{(2)} = \{0,\ldots,M-1\}\times\{p_0+1,\ldots,p_0+K_j\}$, and
	a partial last column $\mathcal{R}_j^{(3)}$ (rows $m=0,\ldots,m_2$ of column
	$p_0+K_j+1$), where $K_j = \lfloor L_{pj}/M \rfloor$ is the number of full
	columns. Bounding $|h_{in}|\leq c/r_{\min}$ (LOS amplitude bounded uniformly by
	the minimum element-to-antenna distance $r_{\min}>0$), each piece is bounded
	separately.
	
	\emph{Full-columns block $\mathcal{R}_j^{(2)}$.} The double sum factors into a
	product of a row sum and a column sum,
	\begin{equation}
		\begin{split}
			&\left|\sum_{(m,p)\in\mathcal{R}_j^{(2)}}\! e^{j\theta_{(m,p)}}\right|
			= \left|\sum_{m=0}^{M-1}\! e^{j2\pi m\alpha_{ij}^{(y)}}\right|
			\cdot
			\left|\sum_{p=p_0+1}^{p_0+K_j} e^{j2\pi p\cdot 0}\right|.
		\end{split}
		\label{eq:geom_series_block_setup}
	\end{equation}
	The $y$-direction factor is a \emph{full} sweep over all $M$ rows with
	$\alpha_{ij}^{(y)}\notin\mathbb{Z}$, bounded by the standard geometric-series
	bound $1/|\sin(\pi\alpha_{ij}^{(y)})| = O(1)$, independent of $K_j$. The
	$z$-direction factor, since $\alpha_{ij}^{(z)}=0$ identically, is not an
	oscillating sum at all: $\bigl|\sum_{p=p_0+1}^{p_0+K_j}e^{j2\pi p\cdot0}\bigr|
	= K_j$ exactly, with no cancellation. Hence
	\begin{equation}
		\Bigl|\sum_{(m,p)\in\mathcal{R}_j^{(2)}}e^{j\theta_{(m,p)}}\Bigr|
		= O(1)\cdot K_j = O(K_j) =  O\!\left(\frac{L_{pj}}{\sqrt{L}}\right),
		\label{eq:offdiag_colmajor}
	\end{equation}
	using $K_j = \lfloor L_{pj}/M\rfloor$ and $M=\sqrt{L}$.
	
	\emph{Partial columns $\mathcal{R}_j^{(1)}, \mathcal{R}_j^{(3)}$.} Each is a
	single-column sum over a contiguous range of rows --- purely a $y$-direction
	sum, unaffected by the $z$-direction degeneracy --- itself a geometric
	series bounded by
	\begin{equation}
		\left|\sum_{(m,p)\in\mathcal{R}_j^{(1)}} e^{j\theta_{(m,p)}}\right|
		\leq \frac{1}{|\sin(\pi\alpha_{ij}^{(y)})|} = O(1),
		\label{eq:geom_series_partial}
	\end{equation}
	and analogously for $\mathcal{R}_j^{(3)}$, independent of $L_{pj}$.
	
	\emph{Combining the three pieces.} By the triangle inequality,
	\begin{equation}
		\begin{split}
			\left|\sum_{n\in\mathcal{R}_j} e^{j\theta_n}\right|
			\leq \sum_{i=1}^{3}\left|\sum_{n\in\mathcal{R}_j^{(i)}} e^{j\theta_n}\right|
			&= O\!\left(\frac{L_{pj}}{\sqrt{L}}\right) + O(1) + O(1)\\
			&= O\!\left(\frac{L_{pj}}{\sqrt{L}}\right),
			\label{eq:geom_series}
		\end{split}
	\end{equation}
	the full-columns block dominating the two partial-column terms. Substituting
	into \eqref{eq:m_j_offdiag},
	\begin{equation}
		|[\mathbf{m}_j]_i|
		\leq \frac{\sqrt{\pi}}{2}\sqrt{\alpha_j}\,\frac{c}{r_{\min}}\cdot
		O\!\left(\frac{L_{pj}}{\sqrt{L}}\right)
		= O\!\left(\frac{L_{pj}}{\sqrt{L}}\right).
		\label{eq:offdiag_O1}
	\end{equation}
	Since the diagonal entry satisfies $[\mathbf{m}_j]_j = O(L_{pj})$
	(equation~\eqref{eq:m_j_diag}), we obtain
	\begin{equation}
		\boxed{
			\frac{|[\mathbf{m}_j]_i|}{|[\mathbf{m}_j]_j|}
			= \frac{O(L_{pj}/\sqrt{L})}{O(L_{pj})}
			= O\!\left(\frac{1}{\sqrt{L}}\right) \to 0
			\quad\hspace{-2mm}\text{as } L\to\infty, \hspace{-4mm}
			\quad i \neq j,
		}
		\label{eq:ratio_to_zero}
	\end{equation}
	provided $\alpha_{ij}^{(y)}\notin\mathbb{Z}$ --- the only non-resonance
	condition remaining, since $\alpha_{ij}^{(z)}=0$ is a fixed consequence of
	the array geometry~\eqref{eq:antenna_pos} rather than a genericity
	assumption. This still
	vanishes and preserves every qualitative conclusion of this work, at the
	rate $O(1/\sqrt{L})$ set by the array's broadside orientation, rather than
	the faster $O(1/L_{pj})$ rate that would hold if the antennas were not
	confined to a single plane.
	
	\smallskip
	\noindent\textbf{Non-resonance condition.}
	The condition $\alpha_{ij}^{(y)}\notin\mathbb{Z}$, required for
	\eqref{eq:geom_series_block_setup}--\eqref{eq:ratio_to_zero} to hold, fails
	only if the $y$-direction spatial frequency is an exact integer. Since
	$i,j$ are BS antenna indices on the ULA, this requires the antenna
	separation $\|\mathbf{p}_{a,i}-\mathbf{p}_{a,j}\|$ and the corresponding
	projection angle to satisfy an exact resonance condition. For the standard
	half-wavelength spacing $d=\lambda/2$, this requires $\nu_{ij}^{(y)}\in\{0,\pm
	2\}$, corresponding to zero antenna separation ($i=j$, excluded by
	definition) or grazing incidence. In all other configurations, i.e., for
	generic BS antenna spacing $d_A$, \eqref{eq:ratio_to_zero} holds and the
	off-diagonal entries of $\mathbf{M}_{-k}$ are negligible compared to the
	diagonal entries for large $L_{pj}$.
	
	
	\noindent\textbf{Covariance matrix $\mathbf{C}_j$.}
	Write $\mathbf{g}_j = \mathbf{m}_j + \mathbf{f}_j$ where $\mathbf{f}_j \triangleq \mathbf{g}_j - \mathbf{m}_j$
	is zero mean. The covariance matrix $\mathbf{C}_j = \mathbb{E}[\mathbf{f}_j\mathbf{f}_j^H]$ is computed
	by separating on-partition and off-partition contributions.
	
	For $n \in \mathcal{R}_j$, using \eqref{eq:phase_design}, the fluctuation in $\sqrt{\alpha_j}A_{ij}$ is
	\begin{equation}
		[f_j^A]_i = \sqrt{\alpha_j}\sum_{n \in \mathcal{R}_j}|h_{in}|
		\bigl(|t_{nj}|-\mathbb{E}[|t_{nj}|]\bigr)e^{j(\angle h_{in}-\angle h_{jn})},
	\end{equation}
	and the $(i,i')$ entry of the on-partition covariance is, using independence of $|t_{nj}|$ across $n$,
	\begin{equation}
		\begin{split}
			\left[\mathbb{E}[\mathbf{f}_j^A(\mathbf{f}_j^A)^H]\right]_{ii'}
			&= \alpha_j\,\mathrm{Var}(|t_{nj}|)\sum_{n \in \mathcal{R}_j}
			|h_{in}|\,|h_{i'n}|\,e^{j(\angle h_{in}-\angle h_{i'n})}\\
			&= \alpha_j\!\left(1-\frac{\pi}{4}\right)\sum_{n \in \mathcal{R}_j}[\mathbf{p}_n\mathbf{p}_n^H]_{ii'},
			\label{eq:Cj_on}
		\end{split}
	\end{equation}
	where for each $n\in\mathcal{R}_j$, $\mathbf{p}_n$
	is the vector with $i$-th entry $|h_{in}|\,e^{j(\angle h_{in}-\angle h_{jn})}$.
	
	For $n \notin \mathcal{R}_j$, since $t_{nj} \sim \mathcal{CN}(0,1)$ and $\phi_n \perp t_{nj}$, the
	fluctuation $[f_j^B]_i = \sqrt{\alpha_j}\sum_{n \notin \mathcal{R}_j}h_{in}t_{nj}e^{j\phi_n}$ has covariance
	\begin{equation}
		\begin{split}
			\left[\mathbb{E}[\mathbf{f}_j^B(\mathbf{f}_j^B)^H]\right]_{ii'}
			&	= \alpha_j\sum_{n \notin \mathcal{R}_j}h_{in}h_{i'n}^*\,\mathbb{E}[|t_{nj}|^2]
			= \alpha_j\sum_{n \notin \mathcal{R}_j}h_{in}h_{i'n}^*\\
			&	= \alpha_j\sum_{n \notin \mathcal{R}_j}[\mathbf{h}_n\mathbf{h}_n^H]_{ii'},
			\label{eq:Cj_off}
		\end{split}
	\end{equation}
	The cross terms between
	on-partition and off-partition fluctuations vanish since those sets of RIS elements are disjoint
	and $t_{nj}$ is independent across $n$. Summing \eqref{eq:Cj_on} and \eqref{eq:Cj_off},
	\begin{equation}
		\mathbf{C}_j = \mathbb{E}[\mathbf{f}_j\mathbf{f}_j^H]
		= \alpha_j\!\left(1-\frac{\pi}{4}\right)\sum_{n \in \mathcal{R}_j}\mathbf{p}_n\mathbf{p}_n^H
		+ \alpha_j\sum_{n \notin \mathcal{R}_j}\mathbf{h}_n\mathbf{h}_n^H.
		\label{eq:Cj_full}
	\end{equation}
	The full second moment of column $j$ is therefore
	\begin{equation}
		\mathbb{E}[\mathbf{g}_j\mathbf{g}_j^H] = \mathbf{m}_j\mathbf{m}_j^H + \mathbf{C}_j,
		\label{eq:second_moment_corrected}
	\end{equation}
	which is in general neither proportional to the identity nor a rank-one matrix.
	
	Define the mean matrix $\mathbf{M}_{-k} \triangleq [\mathbf{m}_j]_{j \neq k} \in \mathbb{C}^{K \times (K-1)}$. In the following, we establish the diagonal dominance of the mean matrix $\mathbf{M}_{-k}$.
	From \eqref{eq:m_j_diag} and \eqref{eq:offdiag_O1}, the $j$-th column of $\mathbf{M}_{-k}$ satisfies
	\begin{equation}
		|[\mathbf{m}_j]_j| = O(L_{pj}), \qquad
		|[\mathbf{m}_j]_i| = O\!\left(\frac{L_{pj}}{\sqrt{L}}\right) \text{ for } i \neq j,
		\label{eq:diag_dom}
	\end{equation}
	where $|\cdot|$ denotes the complex modulus. Hence $\mathbf{M}_{-k}$ is diagonally dominant in the sense that each column's
	mean magnitude is concentrated on its own diagonal index. We next compare the energy of $\mathbf{m}_j$ to
	that of the fluctuation $\mathbf{f}_j$. From \eqref{eq:m_j_diag},
	\begin{equation}
		\|\mathbf{m}_j\| \geq |[\mathbf{m}_j]_j|
		= \frac{\sqrt{\pi}}{2}\sqrt{\alpha_j}\sum_{n \in \mathcal{R}_j}|h_{jn}|
		= O(L_{pj}).
		\label{eq:m_norm}
	\end{equation}
	From \eqref{eq:Cj_full}, $\mathrm{tr}(\mathbf{C}_j) = \mathbb{E}[\|\mathbf{f}_j\|^2]$, and
	\begin{equation}
		\begin{split}
			\mathbb{E}[\|\mathbf{f}_j\|^2]
			&= \alpha_j\!\left(1-\frac{\pi}{4}\right)\sum_{n \in \mathcal{R}_j}\sum_i|h_{in}|^2\\
			&+ \alpha_j\sum_{n \notin \mathcal{R}_j}\sum_i|h_{in}|^2
			\leq \alpha_j\sum_{n=1}^{L}\sum_{i=1}^{K}|h_{in}|^2
			= O(L),
			\label{eq:f_norm}
		\end{split}
	\end{equation}
	so $\mathbb{E}[\|\mathbf{f}_j\|^2] \leq \alpha_j C_h L$ for some finite constant
	$C_h = \max_{i,n}|h_{in}|^2 \cdot K < \infty$ that depends only on the LOS geometry.
	From \eqref{eq:m_norm}, $\|\mathbf{m}_j\|^2 \geq |[\mathbf{m}_j]_j|^2 =
	\frac{\pi}{4}\alpha_j\bigl(\sum_{n\in\mathcal{R}_j}|h_{jn}|\bigr)^2 \geq
	\frac{\pi}{4}\alpha_j c_h^2 L_{pj}^2$, where $c_h = \min_{n\in\mathcal{R}_j}|h_{jn}| > 0$.
	Therefore the ratio of mean-square fluctuation energy to squared mean norm satisfies
	\begin{equation}
		\frac{\mathbb{E}[\|\mathbf{f}_j\|^2]}{\|\mathbf{m}_j\|^2}
		\leq \frac{\alpha_j C_h L}{\tfrac{\pi}{4}\alpha_j c_h^2 L_{pj}^2}
		= \frac{4C_h}{\pi c_h^2}\cdot\frac{L}{L_{pj}^2}
		\to 0 \quad\text{as } L_{pj}\to\infty,
		\label{eq:snr_ratio}
	\end{equation}
	since $L/L_{pj}^2 \to 0$ when $L_{pj}\to\infty$ (the partition grows without bound while
	$L/L_{pj}$ remains bounded).
	
	This gives a precise meaning to the approximation $\mathbf{g}_j\approx\mathbf{m}_j$:
	defining the normalised residual $\widetilde{\mathbf{f}}_j \triangleq
	\mathbf{f}_j/\|\mathbf{m}_j\|$, equation~\eqref{eq:snr_ratio} states
	\begin{equation}
		\mathbb{E}\!\left[\|\widetilde{\mathbf{f}}_j\|^2\right]
		= \frac{\mathbb{E}[\|\mathbf{f}_j\|^2]}{\|\mathbf{m}_j\|^2}
		\to 0
		\quad\text{as }L_{pj}\to\infty,
		\label{eq:ms_convergence}
	\end{equation}
	i.e., $\mathbf{f}_j/\|\mathbf{m}_j\| \to \mathbf{0}$ in mean square. We use convergence
	in mean square (m.s.) throughout, which implies convergence in probability. Equivalently,
	writing $\mathbf{g}_j = \mathbf{m}_j(\mathbf{1} + \widetilde{\mathbf{f}}_j)$, the
	relative fluctuation $\widetilde{\mathbf{f}}_j$ converges to zero in mean square,
	so $\mathbf{g}_j/\|\mathbf{m}_j\|\to\mathbf{m}_j/\|\mathbf{m}_j\|$ in mean square.
	Consequently $\mathbf{G}_{-k}\to\mathbf{M}_{-k}$ in the sense that each column
	converges to its mean in mean square relative to the mean norm.
	
	We now establish that $|[\mathbf{u}_k]_k|^2 \to 1$ in a probabilistic sense
	as $L_{pj}\to\infty$, proceeding in three steps.
	
	\medskip
	\noindent\textbf{Step 1:} In this step, we will show the convergence of $\mathbf{G}_{-k}$ to $\mathbf{M}_{-k}$
	which implies that the null-space equation concentrates.
	The null vector $\mathbf{u}_k$ satisfies $\mathbf{G}_{-k}^H\mathbf{u}_k = \mathbf{0}$,
	which can be written as
	\begin{equation}
		\mathbf{M}_{-k}^H\mathbf{u}_k + \mathbf{F}_{-k}^H\mathbf{u}_k = \mathbf{0},
		\label{eq:null_split}
	\end{equation}
	where $\mathbf{F}_{-k} \triangleq \mathbf{G}_{-k} - \mathbf{M}_{-k}$ is the zero-mean
	fluctuation matrix, so that, column by column, $\mathbf{m}_j^H\mathbf{u}_k =
	-\mathbf{f}_j^H\mathbf{u}_k$ for each $j\neq k$. Since $\|\mathbf{u}_k\|=1$, the
	Cauchy--Schwarz inequality gives
	\begin{equation}
		\mathbb{E}\!\left[|\mathbf{f}_j^H\mathbf{u}_k|^2\right]
		\leq \mathbb{E}\!\left[\|\mathbf{f}_j\|^2\right]\cdot\|\mathbf{u}_k\|^2
		= \mathbb{E}\!\left[\|\mathbf{f}_j\|^2\right],
		\label{eq:residual_bound}
	\end{equation}
	and dividing both sides by $\|\mathbf{m}_j\|^2$ and applying the ratio
	result~\eqref{eq:ms_convergence},
	\begin{equation}
		\frac{\mathbb{E}\!\left[|\mathbf{f}_j^H\mathbf{u}_k|^2\right]}{\|\mathbf{m}_j\|^2}
		\leq \frac{\mathbb{E}\!\left[\|\mathbf{f}_j\|^2\right]}{\|\mathbf{m}_j\|^2}
		= \mathbb{E}\!\left[\|\widetilde{\mathbf{f}}_j\|^2\right]
		\to 0
		\quad\text{as } L_{pj}\to\infty.
		\label{eq:residual_bound_normalised}
	\end{equation}
	Hence $\mathbf{f}_j^H\mathbf{u}_k/\|\mathbf{m}_j\| \xrightarrow{\mathrm{m.s.}} 0$,
	and since $\mathbf{m}_j^H\mathbf{u}_k = -\mathbf{f}_j^H\mathbf{u}_k$, the same holds
	for $\mathbf{m}_j^H\mathbf{u}_k$:
	\begin{equation}
		\begin{split}
			&\frac{\mathbf{m}_j^H\mathbf{u}_k}{\|\mathbf{m}_j\|}
			\xrightarrow{\mathrm{m.s.}} 0,
			\qquad	\text{equivalently}\\
			&
			\mathbf{m}_j^H\mathbf{u}_k = o_p(\|\mathbf{m}_j\|) = o_p(L_{pj})
			\quad\text{as }L_{pj}\to\infty,
			\label{eq:null_cond}
		\end{split}
	\end{equation}
	using $\|\mathbf{m}_j\| = O(L_{pj})$, which follows from \eqref{eq:diag_dom}: since
	$K$ is fixed, $\|\mathbf{m}_j\|^2 = |[\mathbf{m}_j]_j|^2 + \sum_{i\neq j}|[\mathbf{m}_j]_i|^2
	= O(L_{pj}^2) + (K-1)\cdot O(L_{pj}^2/L) = O(L_{pj}^2)$, so $\|\mathbf{m}_j\|=O(L_{pj})$;
	this is the upper bound needed here, complementing the lower bound
	$\|\mathbf{m}_j\|\geq|[\mathbf{m}_j]_j|$ already noted in \eqref{eq:m_norm}. 
	
	\medskip
	\noindent\textbf{Step 2:} In this step, we will show that  $[\mathbf{u}_k]_j \xrightarrow{p} 0$ for all $j\neq k$.
	By \eqref{eq:diag_dom}, $\|\mathbf{m}_j\|^2 = |[\mathbf{m}_j]_j|^2 +
	\sum_{i\neq j}|[\mathbf{m}_j]_i|^2 = |[\mathbf{m}_j]_j|^2\bigl(1+o(1)\bigr)$ as
	$L_{pj}\to\infty$, since the $K-1$ off-diagonal terms are each $O(L_{pj}/\sqrt{L})$
	while $|[\mathbf{m}_j]_j|=O(L_{pj})$ and $K$ is fixed, so
	$\|\mathbf{m}_j\|/|[\mathbf{m}_j]_j| \to 1$. Combining this with
	\eqref{eq:null_cond} gives the equivalent normalization
	\begin{equation}
		\frac{\mathbf{m}_j^H\mathbf{u}_k}{|[\mathbf{m}_j]_j|}
		\xrightarrow{p} 0
		\quad\text{as } L_{pj}\to\infty.
		\label{eq:null_cond_normalised}
	\end{equation}
	Expanding using \eqref{eq:diag_dom},
	\begin{equation}
		\mathbf{m}_j^H\mathbf{u}_k
		= [\mathbf{m}_j]_j^*[\mathbf{u}_k]_j
		+ \sum_{i \neq j}[\mathbf{m}_j]_i^*[\mathbf{u}_k]_i.
		\label{eq:null_expand}
	\end{equation}
	Since $\|\mathbf{u}_k\|=1$, every entry satisfies $|[\mathbf{u}_k]_i|\leq 1$, so
	\begin{equation}
		\left|\sum_{i\neq j}[\mathbf{m}_j]_i^*[\mathbf{u}_k]_i\right|
		\leq \sum_{i\neq j}|[\mathbf{m}_j]_i|
		\leq (K-1)\max_{i\neq j}|[\mathbf{m}_j]_i|
		= O\!\left(\frac{L_{pj}}{\sqrt{L}}\right),
		\label{eq:offdiag_small}
	\end{equation}
	using \eqref{eq:diag_dom} ($|[\mathbf{m}_j]_i|=O(L_{pj}/\sqrt{L})$ for $i\neq j$, and $K$ is
	fixed). The diagonal term satisfies $|[\mathbf{m}_j]_j|\geq \frac{\sqrt{\pi}}{2}
	\sqrt{\alpha_j}c_h L_{pj}$ from \eqref{eq:m_j_diag}. Therefore, rearranging
	\eqref{eq:null_expand} and dividing by $|[\mathbf{m}_j]_j|$,
	\begin{equation}
		|[\mathbf{u}_k]_j|
		= \frac{1}{|[\mathbf{m}_j]_j|}
		\left|\mathbf{m}_j^H\mathbf{u}_k
		- \sum_{i\neq j}[\mathbf{m}_j]_i^*[\mathbf{u}_k]_i\right|
		\leq \frac{|\mathbf{m}_j^H\mathbf{u}_k|}{|[\mathbf{m}_j]_j|}
		+ \frac{O(L_{pj}/\sqrt{L})}{|[\mathbf{m}_j]_j|}.
		\label{eq:uk_j_bound}
	\end{equation}
	The first term converges to zero in probability by
	\eqref{eq:null_cond_normalised}. The second term is
	$O(L_{pj}/\sqrt{L})/O(L_{pj}) = O(1/\sqrt{L})\to 0$ deterministically. Hence
	\begin{equation}
		[\mathbf{u}_k]_j \xrightarrow{p} 0
		\quad\text{as }L_{pj}\to\infty,
		\qquad \forall\,j\neq k.
		\label{eq:uk_j_to_zero}
	\end{equation}
	
	\medskip
	\noindent\textbf{Step 3:} In this step, we will show that  $\mathbb{E}[|[\mathbf{u}_k]_k|^2]\to 1$.
	Since $\|\mathbf{u}_k\|^2=1$ identically,
	\begin{equation}
		|[\mathbf{u}_k]_k|^2
		= 1 - \sum_{j\neq k}|[\mathbf{u}_k]_j|^2.
		\label{eq:uk_k_identity}
	\end{equation}
	By \eqref{eq:uk_j_to_zero}, $[\mathbf{u}_k]_j\xrightarrow{p}0$ for each $j\neq k$.
	We upgrade this to $\sum_{j\neq k}\mathbb{E}\!\left[|[\mathbf{u}_k]_j|^2\right]
	\to 0
	\quad\text{as }L_{pj}\to\infty$ in two steps: first to convergence in probability of the
	squared magnitude, then to convergence of its expectation.
	
	\smallskip
	\noindent\emph{(a) Convergence in probability of $|[\mathbf{u}_k]_j|^2$.}
	Let $g(x)\triangleq|x|^2$ (here $x$ is a dummy argument used to
	name the function $g$, not a symbol used elsewhere in this work); $g$ is
	continuous on $\mathbb{C}$. By the continuous mapping
	theorem~\cite[Thm.~25.7]{Billingsley1995}, $[\mathbf{u}_k]_j\xrightarrow{p}0$
	and $g$ continuous together give
	\begin{equation}
		|[\mathbf{u}_k]_j|^2 = g\bigl([\mathbf{u}_k]_j\bigr) \xrightarrow{p} g(0) = 0,
		\qquad j\neq k.
		\label{eq:cmt_step}
	\end{equation}
	
	\smallskip
	\noindent\emph{(b) Convergence of the expectation.} Convergence in
	probability alone does \emph{not} imply convergence of the expectation in
	general (a rare but large enough outlier can keep the expectation from
	following); an additional boundedness condition is needed. Since
	$\|\mathbf{u}_k\|=1$ exactly, every entry satisfies
	$0\leq|[\mathbf{u}_k]_j|^2\leq 1$, so this boundedness holds uniformly, with
	no dependence on $L_{pj}$. Applying the dominated convergence
	theorem~\cite[Thm.~16.4]{Billingsley1995} to~\eqref{eq:cmt_step},
	\begin{equation}
		\mathbb{E}\!\left[\sum_{j\neq k}|[\mathbf{u}_k]_j|^2\right]
		= \sum_{j\neq k}\mathbb{E}\!\left[|[\mathbf{u}_k]_j|^2\right]
		\to 0
		\quad\text{as }L_{pj}\to\infty.
		\label{eq:sum_to_zero}
	\end{equation}
	Taking expectations in \eqref{eq:uk_k_identity} and applying \eqref{eq:sum_to_zero},
	\begin{equation}
		\begin{split}
			&	\mathbb{E}\!\left[\|\mathbf{P}_{-k}^{\perp}\mathbf{e}_k\|^2\right]
			= \mathbb{E}\!\left[|[\mathbf{u}_k]_k|^2\right]\\
			&= 1 - \sum_{j\neq k}\mathbb{E}\!\left[|[\mathbf{u}_k]_j|^2\right]
			\;\to\; 1
			\quad\text{as }L_{pj}\to\infty.
			\label{eq:proj_one}
		\end{split}	
	\end{equation}
	The convergence holds in the sense that the expectation of the bounded random variable
	$|[\mathbf{u}_k]_k|^2\in[0,1]$ converges to $1$, which (since the variable is bounded
	above by $1$) also implies $|[\mathbf{u}_k]_k|^2\xrightarrow{p}1$. In the regime of
	this work, $L_{pj}=O(L/K)\to\infty$, so \eqref{eq:proj_one} is the relevant
	approximation.
	
	\medskip
	Since $\mathbf{P}_{-k}^{\perp}=\mathbf{u}_k\mathbf{u}_k^H$, its $(i,j)$ entry is
	$u_{ki}u_{kj}^*$. Using $|u_{ki}|\leq 1$ for all $i$ (as $\|\mathbf{u}_k\|=1$)
	together with \eqref{eq:uk_j_to_zero}, the off-diagonal entries satisfy, for
	$i\neq k$,
	\begin{equation}
		|u_{kk}u_{ki}^*| \leq |u_{ki}| \xrightarrow{p} 0,
		\qquad
		|u_{ki}u_{kk}^*| \leq |u_{ki}| \xrightarrow{p} 0,
		\label{eq:offdiag_matrix_conv}
	\end{equation}
	and for $i,j\neq k$,
	\begin{equation}
		|u_{ki}u_{kj}^*| \leq |u_{ki}|\,|u_{kj}| \xrightarrow{p} 0.
		\label{eq:offdiag_matrix_conv2}
	\end{equation}
	Together with $|u_{kk}|^2\xrightarrow{p}1$ from \eqref{eq:proj_one}, every entry of
	$\mathbf{u}_k\mathbf{u}_k^H$ converges in probability to the corresponding entry of
	$\mathbf{e}_k\mathbf{e}_k^H$:
	\begin{equation}
		\boxed{
			\mathbf{P}_{-k}^{\perp} = \mathbf{u}_k\mathbf{u}_k^H
			\xrightarrow{p} \mathbf{e}_k\mathbf{e}_k^H
			\quad\text{entrywise, as } L_{pj}\to\infty.
		}
		\label{eq:Pperp_matrix_conv}
	\end{equation}
	We will apply this matrix-level convergence
	$\mathbf{P}_{-k}^{\perp}\approx\mathbf{e}_k\mathbf{e}_k^H$ to the
	derivations of each term in \eqref{eq:quad_expansion} that follow.
	
	\section{Derivation of the Cross-Term Asymptotic Negligibility}
	\label{app:crossterm}
	
	\subsubsection{Sources of randomness}
	
	\begin{itemize}
		\item $S_k$ depends only on the fading channel magnitudes of user $k$, $\{|t_{nk}|\}_{n \in \mathcal{R}_k}$.
		\item $\boldsymbol{\zeta}_k = \mathbf{a}_k^{\perp} + \mathbf{b}_k$. The vector
		$\mathbf{a}_k^{\perp}$ depends only on $\{|t_{nk}|\}_{n \in \mathcal{R}_k}$. The vector $\mathbf{b}_k$ involves $t_{nk}$ for $n \notin \mathcal{R}_k$,
		but each of its entries is multiplied by a phase shift $\phi_n = -\angle(h_{jn}t_{nj})$
		designed for another user $j \neq k$, so $\mathbf{b}_k$ also carries a randomness from
		$\angle t_{nj}$ for $j \neq k$.
		\item $\mathbf{P}_{-k}^{\perp}$ depends on the columns $\mathbf{g}_j$, $j \neq k$, i.e., on
		$t_{nj}$ for all users $j \neq k$.
	\end{itemize}
	We now prove that $\mathbf{P}_{-k}^{\perp}$ is independent of the pair
	$(S_k,\boldsymbol{\zeta}_k)$. Define the two disjoint sets of fading channel variables
	\begin{equation}
		\mathcal{T}_k \;\triangleq\; \{t_{nk} : n = 1,\ldots,L\}, \mathcal{T}_{-k} \;\triangleq\; \{t_{nj} : n = 1,\ldots,L,\; j \neq k\},
	\end{equation}
	which are mutually independent by assumption. Observe that
	\begin{equation}
		S_k = f_S\!\left(\mathcal{T}_{k}\right),
		\qquad
		\mathbf{P}_{-k}^{\perp} = f_P(\mathcal{T}_{-k}),
	\end{equation}
	for measurable functions $f_S$ and $f_P$. Since $\mathcal{T}_k$ and $\mathcal{T}_{-k}$ are
	independent, $S_k \perp \mathbf{P}_{-k}^{\perp}$.
	
	For $\boldsymbol{\zeta}_k = \mathbf{a}_k^{\perp} + \mathbf{b}_k$, the component $\mathbf{a}_k^{\perp}$
	satisfies
	\begin{equation}
		\left[\mathbf{a}_k^{\perp}\right]_i
		= \sum_{n \in \mathcal{R}_k}|h_{in}|\,|t_{nk}|\,
		e^{j(\angle h_{in} - \angle h_{kn})},
		\qquad i \neq k,
	\end{equation}
	which is a function of $\mathcal{T}_k$ only, hence $\mathbf{a}_k^{\perp} \perp \mathbf{P}_{-k}^{\perp}$.
	
	The component $\mathbf{b}_k$ requires a separate argument. Its $i$-th entry is
	\begin{equation}
		[\mathbf{b}_k]_i
		= \sum_{n \notin \mathcal{R}_k} h_{in}\,t_{nk}\,e^{j\phi_n}, \phi_n = -\angle(h_{jn}\,t_{nj}), n \in \mathcal{R}_j,\; j \neq k,
	\end{equation}
	which involves variables from both $\mathcal{T}_k$ and $\mathcal{T}_{-k}$. To establish
	independence from $\mathbf{P}_{-k}^{\perp} = f_P(\mathcal{T}_{-k})$, we examine the conditional
	distribution of $[\mathbf{b}_k]_i$ given $\mathcal{T}_{-k}$. Conditioning on $\mathcal{T}_{-k}$
	renders each phase shift $\phi_n$ a fixed constant, so
	\begin{equation}
		[\mathbf{b}_k]_i \,\big|\, \mathcal{T}_{-k}
		= \sum_{n \notin \mathcal{R}_k} c_n\, t_{nk},
		\qquad c_n \triangleq h_{in}\,e^{j\phi_n} \in \mathbb{C} \text{ fixed},
	\end{equation}
	which is a linear combination of i.i.d.\ $\mathcal{CN}(0,1)$ variables with fixed coefficients.
	This conditional distribution does not depend on the specific realization of $\mathcal{T}_{-k}$
	used for conditioning, because $t_{nk} \sim \mathcal{CN}(0,1)$, and the $\mathcal{CN}(0,1)$
	distribution is invariant under multiplication by any fixed unit-modulus scalar~\cite{tse2005fundamentals}. Therefore
	\begin{equation}
		\mathcal{L}\!\left([\mathbf{b}_k]_i \,\big|\, \mathcal{T}_{-k} = \tau\right)
		= \mathcal{L}\!\left([\mathbf{b}_k]_i\right)
		\quad \forall\, \tau,
		\label{eq:bk_cond_dist}
	\end{equation}
	where $\mathcal{L}(\cdot)$ denotes the law (distribution) of a random variable. Equation
	\eqref{eq:bk_cond_dist} is the standard condition for
	independence,
	so $\mathbf{b}_k \perp \mathcal{T}_{-k}$, and consequently
	$\mathbf{b}_k \perp \mathbf{P}_{-k}^{\perp}$.
	
	Combining the three results above,
	\begin{equation}
		\boldsymbol{\zeta}_k = \mathbf{a}_k^{\perp} + \mathbf{b}_k
		\;\perp\; \mathbf{P}_{-k}^{\perp},
	\end{equation}
	so the pair $(S_k, \boldsymbol{\zeta}_k)$ is independent of $\mathbf{P}_{-k}^{\perp}$. 
	
	We now show that the cross term~\eqref{eq:cross_term} is asymptotically
	negligible relative to the leading-order signal term $\mathbb{E}[S_k^2]$ as
	$L_{pj}\to\infty$. Steps~1 and~2 below are exact, and Step~3 relies on the
	large-$L_{pj}$ approximation established in Section~\ref{sec:proof}.
	
	\medskip
	\noindent\textbf{Step 1: Factoring $\mathbf{P}_{-k}^{\perp}$ out using independence.}
	Since $(S_k, \boldsymbol{\zeta}_k) \perp \mathbf{P}_{-k}^{\perp}$, and $S_k$ is real, the tower
	property of conditional expectation gives
	\begin{align}
		\mathbb{E}\!\left[S_k\,\mathrm{Re}\!\left(
		\mathbf{e}_k^H \mathbf{P}_{-k}^{\perp} \boldsymbol{\zeta}_k\right)\right]
		&= \mathbb{E}\!\left[\mathrm{Re}\!\left(\mathbf{e}_k^H \mathbf{P}_{-k}^{\perp}\,
		\mathbb{E}\!\left[S_k\,\boldsymbol{\zeta}_k \,\big|\, \mathbf{P}_{-k}^{\perp}\right]
		\right)\right] \notag\\
		&= \mathbb{E}\!\left[\mathrm{Re}\!\left(\mathbf{e}_k^H \mathbf{P}_{-k}^{\perp}\,
		\mathbb{E}\!\left[S_k\,\boldsymbol{\zeta}_k\right]\right)\right],
		\label{eq:after_factor}
	\end{align}
	
	\medskip
	\noindent\textbf{Step 2: Computing $\mathbb{E}[S_k\,\boldsymbol{\zeta}_k]$ entry by entry.}
	Consider $\boldsymbol{\zeta}_k = \mathbf{a}_k^{\perp} + \mathbf{b}_k$, and the contribution from $\mathbf{b}_k$.
	For $n\notin\mathcal{R}_k$, $t_{nk}$ is independent of the phase shift $\phi_n$
	(designed for a different user) and $\mathbb{E}[t_{nk}]=0$, so
	$\mathbb{E}[S_k\,\mathbf{b}_k] = \mathbb{E}[S_k]\,\mathbb{E}[\mathbf{b}_k] = \mathbf{0}$.
	
	\smallskip
	Now consider $\mathbf{a}_k^{\perp}$ and the off-diagonal entries ($i\neq k$).
	Both $S_k = \sum_{n\in\mathcal{R}_k}|h_{kn}||t_{nk}|$ and
	$[\mathbf{a}_k^{\perp}]_i = \sum_{n\in\mathcal{R}_k}|h_{in}||t_{nk}|
	e^{j(\angle h_{in}-\angle h_{kn})}$ are functions of the same
	i.i.d.\ Rayleigh variables $\{|t_{nk}|\}_{n\in\mathcal{R}_k}$.
	Using $\mathbb{E}[|t_{nk}|^2]=1$ and $\mathbb{E}[|t_{nk}|]\,\mathbb{E}[|t_{mk}|]=\pi/4$ for  $n\ne m$,
	\begin{align}
		\begin{split}
			\mathbb{E}\!\left[S_k\,[\mathbf{a}_k^{\perp}]_i\right]
			&= \sum_{n,m\in\mathcal{R}_k}
			|h_{kn}|\,|h_{im}|\,e^{j(\angle h_{im}-\angle h_{km})}\,
			\mathbb{E}\!\left[|t_{nk}||t_{mk}|\right] \\
			&= \sum_{n\in\mathcal{R}_k}|h_{kn}|\,|h_{in}|\,e^{j(\angle h_{in}-\angle h_{kn})}\\
			&+ \frac{\pi}{4}\!\!\sum_{\substack{n,m\in\mathcal{R}_k\\n\neq m}}\!\!
			|h_{kn}|\,|h_{im}|\,e^{j(\angle h_{im}-\angle h_{km})} \\
			&= \left(1-\frac{\pi}{4}\right)\sum_{n\in\mathcal{R}_k}
			|h_{kn}|\,|h_{in}|\,e^{j(\angle h_{in}-\angle h_{kn})}\\
			&+ \frac{\pi}{4}
			\left(\sum_{n\in\mathcal{R}_k}|h_{kn}|\right)
			\left(\sum_{m\in\mathcal{R}_k}|h_{im}|\,e^{j(\angle h_{im}-\angle h_{km})}\right).
			\label{eq:Sk_aperp_i}
		\end{split}	
	\end{align}

	\smallskip Considering	the $k$-th entry of $\mathbf{a}_k^{\perp}$.
	By construction $[\mathbf{a}_k^{\perp}]_k=0$, so $	\mathbb{E}\!\left[S_k\,[\mathbf{a}_k^{\perp}]_k\right] = 0.$
	
	\smallskip	Collecting all contributions and using $\mathbb{E}[S_k\mathbf{b}_k]=\mathbf{0}$, $\mathbb{E}[S_k\,\boldsymbol{\zeta}_k]$ has entries
	\begin{equation}
		\left[\mathbb{E}\!\left[S_k\,\boldsymbol{\zeta}_k\right]\right]_i
		= \begin{cases}
			0 & i = k,\\[4pt]
			\mathbb{E}\!\left[S_k\,[\mathbf{a}_k^{\perp}]_i\right] & i \neq k,
		\end{cases}
		\label{eq:ESk_zeta}
	\end{equation}
	where $\mathbb{E}[S_k[\mathbf{a}_k^{\perp}]_i]$ is given by \eqref{eq:Sk_aperp_i}.
	
	\medskip
	\noindent\textbf{Step 3: Evaluating \eqref{eq:after_factor} using the rank-one projector.}
	As established in Section~\ref{sec:proof} (equation~\eqref{eq:rank1_projector}),
	$\mathbf{P}_{-k}^{\perp} = \mathbf{u}_k\mathbf{u}_k^H$.
	Substituting and denoting
	$[\mathbf{u}_k]_i = u_{ki}$, the $(k,i)$ entry of $\mathbf{u}_k\mathbf{u}_k^H$ is $u_{kk}\,u_{ki}^*$, which gives
	\begin{align}
		\mathbb{E}\!\left[\mathrm{Re}\!\left(\mathbf{e}_k^H\mathbf{P}_{-k}^{\perp}\,
		\mathbb{E}\!\left[S_k\boldsymbol{\zeta}_k\right]\right)\right]
		&= \mathrm{Re}\!\left(\mathbf{e}_k^H\,\mathbb{E}[\mathbf{u}_k\mathbf{u}_k^H]\,
		\mathbb{E}\!\left[S_k\boldsymbol{\zeta}_k\right]\right) \notag\\
		&= \mathrm{Re}\!\left(
		\sum_{i=1}^{K}\mathbb{E}[u_{kk}\,u_{ki}^*]\,
		\left[\mathbb{E}\!\left[S_k\boldsymbol{\zeta}_k\right]\right]_i
		\right).
		\label{eq:sum_form}
	\end{align}
	
	We split the sum in \eqref{eq:sum_form} into two parts.	From \eqref{eq:ESk_zeta}, the $i=k$ term vanishes exactly.  
	In the following, we show that the $i\neq k$ terms are asymptotically negligible relative to
	$\mathbb{E}[S_k^2]$.
	The remaining terms involve $\mathbb{E}[u_{kk}\,u_{ki}^*]$ and
	$[\mathbb{E}[S_k\boldsymbol{\zeta}_k]]_i$ for $i\neq k$; we bound the order of
	each factor separately.
	
	Equation~\eqref{eq:Sk_aperp_i} for $[\mathbb{E}[S_k\boldsymbol{\zeta}_k]]_i$
	($i\neq k$) has two terms, which we bound separately:
	\emph{Term 1}, $(1-\tfrac{\pi}{4})\sum_{n\in\mathcal{R}_k}|h_{kn}||h_{in}|
	e^{j(\angle h_{in}-\angle h_{kn})}$, and \emph{Term 2}, the product
	$\tfrac{\pi}{4}\bigl(\sum_{n\in\mathcal{R}_k}|h_{kn}|\bigr)
	\bigl(\sum_{m\in\mathcal{R}_k}|h_{im}|e^{j(\angle h_{im}-\angle h_{km})}\bigr)$.
	
	\smallskip
	\noindent\emph{The oscillating sums in both terms share one rate.} The sum
	in Term 1 and the second factor of Term 2 both oscillate with the phase
	$\angle h_{im}-\angle h_{km} = -\frac{2\pi}{\lambda}(r_{im}-r_{km})$, which
	is exactly~\eqref{eq:phase_diff} with $j\to k$. So the same column-major
	geometric-series analysis used to derive~\eqref{eq:offdiag_colmajor}
	applies to both, and for the confirmed antenna geometry of
	Section~\ref{sec:system} ($\alpha_{ik}^{(z)}=0\in\mathbb{Z}$ identically),
	both inherit its rate:
	\begin{equation}
		\sum_{n\in\mathcal{R}_k}|h_{kn}||h_{in}|e^{j(\angle h_{in}-\angle h_{kn})}
		= O\!\left(\frac{L_{pk}}{\sqrt{L}}\right)
		\quad\text{(Term 1's sum)},
	\end{equation}
	\begin{equation}
		\sum_{m\in\mathcal{R}_k}|h_{im}|e^{j(\angle h_{im}-\angle h_{km})}
		= O\!\left(\frac{L_{pk}}{\sqrt{L}}\right)
		\text{(Term 2's second factor)}.
	\end{equation}
	\noindent Therefore, Term 1 is $O(L_{pk}/\sqrt{L})$, and Term 2
	is a product, where its first factor $\sum_{n\in\mathcal{R}_k}|h_{kn}|=O(L_{pk})$
	is already bounded
	in~\eqref{eq:m_j_diag}, so
	$$\text{Term 2} = O(L_{pk})\cdot O\!\left(\frac{L_{pk}}{\sqrt{L}}\right)
	= O\!\left(\frac{L_{pk}^2}{\sqrt{L}}\right).$$
	Since $L_{pk}^2/\sqrt{L}$ grows faster than $L_{pk}/\sqrt{L}$ as
	$L_{pk}\to\infty$, Term 2 dominates Term 1, and their sum satisfies
	\begin{equation}
		\begin{split}
			[\mathbb{E}[S_k\boldsymbol{\zeta}_k]]_i
			&= \underbrace{O\!\left(\frac{L_{pk}}{\sqrt{L}}\right)}_{\text{Term 1}}
			+ \underbrace{\frac{\pi}{4}\,O(L_{pk})\cdot O\!\left(\frac{L_{pk}}{\sqrt{L}}\right)}_{\text{Term 2}}\\
			&= O\!\left(\frac{L_{pk}^2}{\sqrt{L}}\right),
			\qquad i\neq k.
		\end{split}
		\label{eq:ESk_zeta_order}
	\end{equation}
	For $\mathbb{E}[u_{kk}\,u_{ki}^*]$, we first quantify the qualitative
	convergence $[\mathbf{u}_k]_i\xrightarrow{p}0$ established in Step~2 of
	Section~\ref{sec:proof} (equation~\eqref{eq:uk_j_to_zero}) into an explicit
	rate, $\mathbb{E}[|u_{ki}|^2] = O(1/L_{pi})$. Squaring the deterministic
	bound~\eqref{eq:uk_j_bound} via $(a+b)^2\leq 2a^2+2b^2$ and taking
	expectations,
	\begin{equation}
		\mathbb{E}[|u_{ki}|^2]
		\leq \frac{2\,\mathbb{E}\bigl[|\mathbf{m}_i^H\mathbf{u}_k|^2\bigr]}{|[\mathbf{m}_i]_i|^2}
		+ \frac{2\cdot O(L_{pi}^2/L)}{|[\mathbf{m}_i]_i|^2}.
		\label{eq:uki_sq_bound}
	\end{equation}
	Since $\mathbf{m}_i^H\mathbf{u}_k=-\mathbf{f}_i^H\mathbf{u}_k$
	(from~\eqref{eq:null_split}), Cauchy--Schwarz gives
	$\mathbb{E}[|\mathbf{m}_i^H\mathbf{u}_k|^2]\leq\mathbb{E}[\|\mathbf{f}_i\|^2]
	=O(L)$ from~\eqref{eq:f_norm}. 	From~\eqref{eq:m_j_diag}, we can conclude the tight order $|[\mathbf{m}_i]_i|^2=\Theta(L_{pi}^2)$, because $[\mathbf{m}_i]_i$ is real and positive. Dividing each term of~\eqref{eq:uki_sq_bound} by this
	$\Theta(L_{pi}^2)$ denominator gives their individual orders:
	\begin{equation}
		\frac{O(L)}{\Theta(L_{pi}^2)} = O\!\left(\frac{L}{L_{pi}^2}\right)
		\quad\text{(Term 1)},
		\label{eq:term1_order}
	\end{equation}
	\begin{equation}
		\frac{O(L_{pi}^2/L)}{\Theta(L_{pi}^2)} = O\!\left(\frac{1}{L}\right)
		\quad\text{(Term 2, $L_{pi}^2$ cancels)}.
		\label{eq:term2_order}
	\end{equation}
	Since $L_{pi}\leq L$ always (a partition cannot exceed the array),
	$1/L\leq L/L_{pi}^2$, so Term 2's bound in~\eqref{eq:term2_order} is also
	covered by the order $O(L/L_{pi}^2)$ from Term~1. Combination of both terms
	in~\eqref{eq:uki_sq_bound} are therefore $O(L/L_{pi}^2)$, and since
	$L_{pi}=\Theta(L/K)$ ($K$ fixed), substituting $L_{pi}^2=\Theta(L^2/K^2)$
	gives
	\begin{equation}
		\mathbb{E}[|u_{ki}|^2]
		= O\!\left(\frac{L}{L_{pi}^2}\right)
		= O\!\left(\frac{L}{L^2/K^2}\right)
		= O\!\left(\frac{K^2}{L}\right)
		= O\!\left(\frac{1}{L_{pi}}\right),
		\label{eq:uki_sq_rate}
	\end{equation}
	where the last step uses $L=\Theta(K L_{pi})$ and absorbs the fixed
	constant $K$ into the $O(\cdot)$ notation.
	This rate then bounds $\mathbb{E}[u_{kk}\,u_{ki}^*]$ through the chain
	\begin{equation}
		\begin{split}
			\bigl|\mathbb{E}[u_{kk}\,u_{ki}^*]\bigr|
			\overset{(a)}{\leq} \mathbb{E}[|u_{kk}\,u_{ki}^*|]
			\overset{(b)}{\leq} \mathbb{E}[|u_{ki}|]
			\overset{(c)}{\leq}
			&
			\sqrt{\mathbb{E}[|u_{ki}|^2]}\\
			= &O\!\left(L_{pi}^{-1/2}\right),
		\end{split}
		\label{eq:cross_u_rate}
	\end{equation}
	where (a) is $|\mathbb{E}[X]|\leq\mathbb{E}[|X|]$; (b) uses
	$|u_{kk}|\leq 1$ (since $\|\mathbf{u}_k\|=1$), so
	$|u_{kk}u_{ki}^*|=|u_{kk}||u_{ki}|\leq|u_{ki}|$; and (c) is
	Cauchy--Schwarz applied to $\mathbb{E}[|u_{ki}|\cdot 1]$, using the rate
	from~\eqref{eq:uki_sq_rate}.
	Since all partition sizes grow together, $L_{pi}=\Theta(L_{pk})=\Theta(L/K)$
	(equal or comparable partitioning), so \eqref{eq:cross_u_rate} is
	$O(L_{pk}^{-1/2})$. Combining \eqref{eq:ESk_zeta_order} and \eqref{eq:cross_u_rate}, each term in
	the $i\neq k$ sum of \eqref{eq:sum_form} satisfies
	\begin{equation}
		\bigl|\mathbb{E}[u_{kk}\,u_{ki}^*]\,[\mathbb{E}[S_k\boldsymbol{\zeta}_k]]_i\bigr|
		= O(L_{pk}^{-1/2})\cdot O\!\left(\frac{L_{pk}^2}{\sqrt{L}}\right)
		= O\!\left(\frac{L_{pk}^{3/2}}{\sqrt{L}}\right).
		\label{eq:cross_term_order}
	\end{equation}
	
	\bibliographystyle{IEEEtran}
	\bibliography{references}

\begin{thebibliography}{10}
\providecommand{\url}[1]{#1}
\csname url@samestyle\endcsname
\providecommand{\newblock}{\relax}
\providecommand{\bibinfo}[2]{#2}
\providecommand{\BIBentrySTDinterwordspacing}{\spaceskip=0pt\relax}
\providecommand{\BIBentryALTinterwordstretchfactor}{4}
\providecommand{\BIBentryALTinterwordspacing}{\spaceskip=\fontdimen2\font plus
\BIBentryALTinterwordstretchfactor\fontdimen3\font minus
  \fontdimen4\font\relax}
\providecommand{\BIBforeignlanguage}[2]{{%
\expandafter\ifx\csname l@#1\endcsname\relax
\typeout{** WARNING: IEEEtran.bst: No hyphenation pattern has been}%
\typeout{** loaded for the language `#1'. Using the pattern for}%
\typeout{** the default language instead.}%
\else
\language=\csname l@#1\endcsname
\fi
#2}}
\providecommand{\BIBdecl}{\relax}
\BIBdecl

\bibitem{DiRenzo2020}
M.~Di~Renzo \emph{et~al.}, ``Smart radio environments empowered by
  reconfigurable intelligent surfaces: How it works, state of research, and the
  road ahead,'' \emph{IEEE J. Sel. Areas Commun.}, vol.~38, no.~11, pp.
  2450--2525, Nov. 2020.

\bibitem{Zhao2025}
X.~Zhao \emph{et~al.}, ``Reconfigurable intelligent surfaces for 6{G}:
  Engineering challenges and the road ahead,'' \emph{Intell. Converg. Netw.},
  vol.~6, no.~1, pp. 53--81, Mar. 2025.

\bibitem{Huang2020Holographic}
C.~Huang \emph{et~al.}, ``Holographic {MIMO} surfaces for 6{G} wireless
  networks: Opportunities, challenges, and trends,'' \emph{IEEE Wireless
  Commun.}, vol.~27, no.~5, pp. 118--125, Oct. 2020.

\bibitem{Li2025Transmissive}
Z.~Li \emph{et~al.}, ``Transmissive {RIS} transceiver enabled multistream
  communication systems: Design, optimization, and analysis,'' \emph{IEEE
  Internet Things J.}, vol.~12, no.~5, pp. 5985--6000, Mar. 2025.

\bibitem{Li2025TMA}
------, ``Toward {TMA}-based transmissive {RIS} transceiver enabled downlink
  communication networks: A consensus-{ADMM} approach,'' \emph{IEEE Trans.
  Commun.}, vol.~73, no.~4, pp. 2832--2846, Apr. 2025.

\bibitem{Mao2024Index}
T.~Mao \emph{et~al.}, ``Index-modulation-aided terahertz communications with
  reconfigurable intelligent surface,'' \emph{IEEE Trans. Wireless Commun.},
  vol.~23, no.~7, pp. 8059--8070, Jul. 2024.

\bibitem{Huang2024Hybrid}
A.~Huang \emph{et~al.}, ``Hybrid active-passive {RIS} transmitter enabled
  energy-efficient multi-user communications,'' \emph{IEEE Trans. Wireless
  Commun.}, vol.~23, no.~9, pp. 10\,653--10\,666, Sep. 2024.

\bibitem{Zeng2024Reconfigurable}
S.~Zeng \emph{et~al.}, ``Reconfigurable refractive surface-enabled multi-user
  holographic {MIMO} communications,'' \emph{IEEE Trans. Wireless Commun.},
  vol.~23, no.~5, pp. 4845--4860, May 2024.

\bibitem{SunJing2024Holographic}
Z.~Sun and Y.~Jing, ``Holographic {MIMO} {NOMA} communications: A power saving
  design,'' \emph{IEEE Trans. Wireless Commun.}, vol.~23, no.~12, pp.
  18\,711--18\,724, Dec. 2024.

\bibitem{Jung2019Reliability}
M.~Jung \emph{et~al.}, ``Reliability analysis of large intelligent surfaces
  ({LIS}s): Rate distribution and outage probability,'' \emph{IEEE Wireless
  Commun. Lett.}, vol.~8, no.~6, pp. 1662--1666, Dec. 2019.

\bibitem{Yuan2020Towards}
J.~Yuan \emph{et~al.}, ``Towards large intelligent surface ({LIS})-based
  communications,'' \emph{IEEE Trans. Commun.}, vol.~68, no.~10, pp.
  6568--6582, Oct. 2020.

\bibitem{Jung2020Performance}
M.~Jung \emph{et~al.}, ``Performance analysis of large intelligent surfaces
  ({LIS}s): Asymptotic data rate and channel hardening effects,'' \emph{IEEE
  Trans. Wireless Commun.}, vol.~19, no.~3, pp. 2052--2065, Mar. 2020.

\bibitem{Jung2021PerformanceActive}
------, ``Performance analysis of active large intelligent surfaces ({LIS}s):
  Uplink spectral efficiency and pilot training,'' \emph{IEEE Trans. Commun.},
  vol.~69, no.~5, pp. 3379--3394, May 2021.

\bibitem{Hu2018Beyond}
S.~Hu \emph{et~al.}, ``Beyond massive {MIMO}: The potential of data
  transmission with large intelligent surfaces,'' \emph{IEEE Trans. Signal
  Process.}, vol.~66, no.~10, pp. 2746--2758, May 2018.

\bibitem{Sanchez2022Distributed}
J.~R. S{\'a}nchez \emph{et~al.}, ``Distributed and scalable uplink processing
  for {LIS}: Algorithm, architecture, and design trade-offs,'' \emph{IEEE
  Trans. Signal Process.}, vol.~70, pp. 2639--2653, 2022.

\bibitem{Li2024TowardTransmissive}
Z.~Li \emph{et~al.}, ``Toward transmissive {RIS} transceiver enabled uplink
  communication systems: Design and optimization,'' \emph{IEEE Internet Things
  J.}, vol.~11, no.~4, pp. 6788--6801, Feb. 2024.

\bibitem{An2024Exploiting}
K.~An \emph{et~al.}, ``Exploiting multi-layer refracting {RIS}-assisted
  receiver for {HAP}-{SWIPT} networks,'' \emph{IEEE Trans. Wireless Commun.},
  vol.~23, no.~10, pp. 12\,638--12\,657, Oct. 2024.

\bibitem{Sun2025SINR}
Z.~Sun \emph{et~al.}, ``{SINR} analysis and interference mitigation for
  continuous-aperture holographic {MIMO} uplink,'' \emph{IEEE Trans. Veh.
  Technol.}, vol.~74, no.~1, pp. 1747--1752, Jan. 2025.

\bibitem{RahimianJingYu2026}
S.~Rahimian, Y.~Jing, and X.~Yu, ``{RIS}-equipped receiver for two-user systems
  using surface partitioning and {ZF} beamforming,'' \emph{IEEE Wireless
  Commun. Lett.}, vol.~15, pp. 1335--1339, 2026.

\bibitem{Chauhan2023RIS}
A.~Chauhan \emph{et~al.}, ``{RIS} partition-assisted non-orthogonal multiple
  access ({NOMA}) and quadrature-{NOMA} with imperfect {SIC},'' \emph{IEEE
  Trans. Wireless Commun.}, vol.~22, no.~7, pp. 4371--4386, Jul. 2023.

\bibitem{KhaleelBasar2022NOMA}
A.~Khaleel and E.~Basar, ``A novel {NOMA} solution with {RIS} partitioning,''
  \emph{IEEE J. Sel. Topics Signal Process.}, vol.~16, no.~1, pp. 70--81, Jan.
  2022.

\bibitem{vanTrees2002}
H.~L. van Trees, \emph{Optimum Array Processing}.\hskip 1em plus 0.5em minus
  0.4em\relax Wiley-Interscience, 2002.

\bibitem{Kay1993}
S.~M. Kay, \emph{Fundamentals of Statistical Signal Processing, Vol. I}.\hskip
  1em plus 0.5em minus 0.4em\relax Prentice Hall, 1993.

\bibitem{Billingsley1995}
P.~Billingsley, \emph{Probability and Measure}, 3rd~ed.\hskip 1em plus 0.5em
  minus 0.4em\relax Wiley, 1995.

\bibitem{tse2005fundamentals}
D.~Tse and P.~Viswanath, \emph{Fundamentals of Wireless Communication}.\hskip
  1em plus 0.5em minus 0.4em\relax Cambridge Univ. Press, 2005.

\end{thebibliography}
	
\end{document}